    \let\@fnsymbol\@arabic
\title{The Stochastic Score Classification Problem}
\author{Dimitrios Gkenosis\thanks{Department of Informatics and Telecommunications, University of Athens, Athens, Greece \newline \texttt{gkenosis.dimitrios@math.uoa.gr}} \and Nathaniel Grammel\thanks{Department of Computer Science, University of Maryland, College Park, Maryland, US \newline \texttt{ngrammel@cs.umd.edu }} \and Lisa Hellerstein\thanks{NYU Tandon School of Engineering, Brooklyn, NY, USA \newline \texttt{lisa.hellerstein@nyu.edu}} \and Devorah Kletenik\thanks{Department of Computer and Information Science, Brooklyn College, CUNY, Brooklyn, New York, US \newline \texttt{kletenik@sci.brooklyn.cuny.edu}}}
\newcolumntype{Y}{>{\centering\arraybackslash}X}
\DeclareMathOperator*{\argmin}{arg\,min} 
\theoremstyle{plain}
\newtheorem{claim}{Claim} 
\newtheorem{theorem}{Theorem}
\newtheorem{lemma}{Lemma}
\newtheorem{corollary}{Corollary}
\newtheorem{remark}{Remark}
\newcommand{\bit}{\{ 0, 1 \}}
\newcommand{\fullassign}{\bit^n}
\newcommand{\partassign}{\{0,1,*\}^n}
\newcommand{\Ints}{\mathbb{Z}}
\newcommand{\nonnegInts}{\Ints_{\geq 0}}
\newcommand{\boolfunc}[1][f]{\begingroup #1 \colon \bit^{n}\to \bit \endgroup}
\newcommand{\classfunc}[1][f]{\begingroup #1 \colon \bit^{n}\to \{1,\ldots,B\} \endgroup}
\newcommand{\utilfunc}[1][g]{\begingroup #1 \colon \partassign\to \nonnegInts \endgroup}
\newcommand{\OPT}{\mathsf{OPT}}
\newcommand{\ALG}{\mathsf{ALG}}
\renewcommand{\arraystretch}{1.75}
\begin{document}

\maketitle

\begin{abstract}
Consider the following Stochastic Score Classification Problem.  A doctor is assessing a patient's risk of developing a certain disease, and can perform $n$ tests on the patient.  Each test has a binary outcome, positive or negative.  A positive test result is an indication of risk, and a patient's score is the total number of positive test results.  
The doctor needs to classify the patient into one of $B$ risk classes, depending on the score (e.g., LOW, MEDIUM, and HIGH risk).  Each of these classes corresponds to a contiguous range of scores.  
Test $i$ has probability $p_i$ of being positive, and it costs $c_i$ to perform the test.  To reduce costs, instead of performing all tests, the doctor will perform them sequentially and stop testing when it is possible to determine the risk category for the patient.  The problem is to determine the order in which the doctor should perform the tests, so as to minimize the expected testing cost.  We provide approximation algorithms for adaptive and non-adaptive versions of this problem, 
and pose a number of open questions.  
\end{abstract}

\section{Introduction}
We consider the following Stochastic Score Classification (SSClass) problem.
A doctor can perform $n$ tests on a patient, each of which has a positive or negative outcome.   Test $i$ has known probability $p_i$ of having a positive outcome, and costs $c_i$ to perform.  A positive test is indicative of the disease.  The professor needs to assign the patient to a risk class (e.g., LOW, MEDIUM, HIGH) based on how many of the $n$ tests are positive. 
Each class corresponds to a contiguous range of scores.

To reduce costs, instead of performing all tests and computing an exact score, the doctor will perform them one by one, stopping when the class becomes a foregone conclusion.
For example, suppose there are 10 tests and the MEDIUM class corresponds to a score between 4 and 7 inclusive.  If the doctor performed 8 tests, of which 5 were positive, the doctor would not perform the remaining 2 tests, because the final score will be between 5 and 7, meaning that the risk class will be MEDIUM regardless of the outcome of the 2 remaining tests.
The problem is to compute the optimal (adaptive or non-adaptive) order in which to perform the tests, so as to minimize expected testing cost. 

Formally, the Stochastic Score Classification problem is as follows.  Given $B+1$ integers $0=\alpha_1 < \alpha_2<  \ldots < \alpha_{B} <\alpha_{B+1}=n+1$, 
let {\em class $j$} correspond to the {\em scoring interval} $\{\alpha_j,\alpha_j+1, \ldots, \alpha_{j+1}-1 \}$.
The $\alpha_j$ define an associated  
pseudo-Boolean \emph{score classification} function  $f:\{0,1\}^n \rightarrow \{1,\ldots,B\}$, such that $f(X_1, \ldots, X_n)$
is the class whose scoring interval contains the \emph{score} $r(X) = \sum_i X_i$.
Note that $B$ is the number of classes.
Each input variable $X_i$ is independently $1$ with given probability $p_i$, where $0 < p_i < 1$, and is $0$ otherwise.  The value of $X_i$ can only be determined by asking a query (or performing a test), which incurs a given non-zero, real-valued cost $c_i$.

An \emph{evaluation strategy} for $f$ is a sequential adaptive or non-adaptive ordering in which to ask the $n$ possible queries.  
Each query can only be asked once.  Querying must continue until the value of $f$ can be determined, i.e., until the value of $f$ would be the same, no matter how the remainder of the $n$ queries were answered.  
The goal is to design an evaluation strategy for $f$ with minimum expected total query cost. 

We consider both adaptive and non-adaptive versions of the problem.  In the adaptive version, we seek an adaptive strategy, where the choice of the next query can depend on the outcomes of previous queries.  An adaptive strategy corresponds to a decision tree, although we do not require the tree to be output explicitly (it may have exponential size).  In the non-adaptive version, we seek a non-adaptive strategy, which
is a permutation of the queries.   With a non-adaptive strategy, querying proceeds in the order specified by the permutation until the value of $f$ can be determined from the queries performed so far.  

We also consider a weighted variant of the problem, where query $i$ has given integer weight $a_i$, the score is $\sum_i a_i X_i$,
and $\alpha_1 < \alpha_2<  \ldots < \alpha_{B} <\alpha_{B+1}$ where $\alpha_1$ equals the minimum possible value of the score  $\sum_i a_i X_i$, and $\alpha_{B+1}-1$ equals the maximum possible score.  

While we have described the problem above in the context of assessing disease risk, such classification is also used in other contexts, such as assigning letter grades to students, giving a quality rating to a product, and deciding 
 whether or not a person charged with a crime should be released on bail.  In Machine Learning,  the focus is on learning the score classification function~\cite{ustun2016supersparse,tran2016preterm,jung2017simple,zeng2017interpretable,ustun2017optimized}. In contrast, here our focus is on reducing the cost of evaluating the classification function.  

Restricted versions of the weighted and unweighted SSClass problem have been studied previously.
In the algorithms literature, Deshpande et al.\ presented two approximation algorithms solving the {\em Stochastic Boolean Function Evaluation} (SBFE) problem for linear threshold functions~\cite{DeshpandeGoalValue}. 
The general SBFE problem is similar to the adaptive SSClass problem, but instead of evaluating a given score classification function $f$ defined by inputs $\alpha_j$, you need to evaluate a given Boolean function $f$.  When $f$ is a linear threshold function, the problem is equivalent to the weighted adaptive SSClass problem.  
One of the two algorithms of Deshpande et al.\ achieves an $O(\log W)$-approximation factor for this problem using the submodular goal value approach; it involves construction of a goal utility function and application of the Adaptive Greedy algorithm of Golovin and Krause to that function~\cite{golovinKrause}.
Here $W$ is the sum of the magnitudes of the weights $a_i$.
The other algorithm achieves a 3-approximation by applying a dual greedy algorithm to the same
goal utility function.  

A $k$-of-$n$ function is a Boolean function $f$ such that $f(x) = 1$ iff $x_1 + \ldots + x_n \geq k$. 
The SBFE problem for evaluating $k$-of-$n$ functions is equivalent to the unweighted adaptive SSClass problem, with only two classes ($B=2$).  
It has been studied previously in the VLSI testing literature. There is an elegant algorithm for the problem that computes an optimal strategy~\cite{Salloum79,BenDov81,SalloumBreuer84,Chang}.  

The unweighted adaptive SSClass problem for arbitrary numbers of classes was studied in the information theory literature~\cite{Dasetal12,Acharyaetal11,kowshikkumar13}, but only for unit costs.  The main novel contribution there was to establish an equivalence between verification and evaluation, which we discuss below.   


\section{Results and open questions}
We give approximation results for adaptive and non-adaptive versions of the SSClass problem.  We describe most of our results here, but leave description of some others and some of the proofs to the appendix.
A table with all our bounds can be found 
in the next section.

We begin by using the submodular goal value approach of Deshpande et al.\ to
obtain an $O(\log W)$ approximation algorithm for the weighted adaptive SSClass problem.  
This immediately gives an $O(\log n)$  approximation for the unweighted adaptive problem.  We also present a simple alternative algorithm achieving a $B-1$ approximation for the unweighted adaptive problem, and a
$3(B-1)$-approximation algorithm for the weighted adaptive problem again using an algorithm of
Deshpande et al.

We then present our two main results, which are both for the case of unit costs. The first is a 4-approximation algorithm for the adaptive and non-adaptive versions of the unweighted SSClass problem. The second is a $\varphi$-approximation for a special case of the non-adaptive unweighted version, where the problem is to evaluate what we call the Unanimous Vote Function.  Here $\varphi = \frac{1+\sqrt{5}}{2} \approx 1.618$ is the golden ratio.  The Unanimous Vote Function outputs
 POSITIVE if $X_1 = \ldots = X_n = 1$, NEGATIVE if $X_1 = \ldots = X_n = 0$, and UNCERTAIN otherwise.  Equivalently, it is a score classification function with $B=3$ and
scoring intervals $\{0\}, \{1, \ldots, n-1\}$ and $\{n\}$.  
The proofs of our two main results imply upper bounds of 4 and $\varphi$ for the adaptivity gaps of the corresponding problems.   

We use both existing techniques and new ideas in our algorithms.  We use the submodular goal value approach of Deshpande et al.\ to get our $O(\log W)$ bound for the weighted adaptive SSClass problem. This approach cannot yield a bound better than $O(\log n)$ for SSClass problems, since they involve evaluating a function of $n$ relevant Boolean variables~\cite{bachetalgoalvalue}.  

For our other bounds, we exploit the exact algorithm for $k$-of-$n$ evaluation, and the ideas used in its analysis.  
To obtain non-adaptive algorithms for the unit-cost case, we perform a round robin between 2 subroutines, one performing queries in increasing order of $c_i/p_i$, while the second performs them in increasing order of $c_i/(1-p_i)$. For arbitrary costs, instead of standard round robin, we use the modified round robin approach of Allen et al~\cite{allen2017evaluation}.
As has been repeatedly shown, the $c_i/p_i$ ordering and the $c_i/(1-p_i)$ ordering are optimal for evaluation of the Boolean OR (1-of-$n$) and AND ($n$-of-$n$) functions respectively (cf.~\cite{unluyurtReview}).
Intuitively, the first ordering (for OR) favors queries with low cost and high probability of producing the 
value 1, while the second (for AND) favors queries with low cost and high probability of producing the value 0. The proof of optimality follows from the fact that given any ordering, swapping two adjacent queries that do not follow the designated increasing order will decrease expected evaluation cost.

While the algorithm for our first main result is simple, the proof of its 4-approximation bound is not.  It uses ideas from the existing analysis of the $k$-of-$n$ algorithm, which is an easier problem because $B=2$.  To obtain our 4-approximation result we perform a new, careful analysis.  Unlike the analysis of the $k$-of-$n$ algorithm, this analysis only works for unit costs.  

To develop our $\varphi$-approximation for the unanimous vote function, we first note that for such a function, if you perform the first query and observe its outcome, the optimal ordering of the remaining queries can be determined by evaluating a Boolean OR function, or the complement of an AND function.   We then address the
problem of determining an approximately optimal permutation, given the first query.  A standard round robin between the $c_i/p_i = 1/p_i$ ordering, and the $1/(1-p_i)$ ordering, 
yields a factor of 2 approximation.  To obtain the $\varphi$ factor, we stop the round robin at a carefully chosen point and \emph{commit} to one of the two subroutines, abandoning the other.  Our full algorithm for the unanimous vote function works by trying all $n$ possible first queries.  For each, we generate the approximately optimal permutation, and algebraically compute its expected cost.  Finally, out of these $n$ permutations, we choose the one with lowest expected cost.  

We note that although our algorithms are designed to minimize expected cost for independent queries, the goal value function used to achieve the $O(\log W)$ approximation result can also be used to achieve a worst-case bound, and a related bound in the Scenario model~\cite{golovinKrause,grammel2016scenario,kambadur2017adaptive}.

A recurring theme in work on
SSClass problems has been the relationship between these evaluation problems and their
associated verification problems.
In the verification problem, you are given the output class (i.e., the value of the score classification function) before querying, and
just need to perform enough tests to certify (verify) that the given output class is correct.
Thus optimal expected verification cost lower bounds optimal expected evaluation cost.
Surprisingly, the result of 
Das et al.~\cite{Dasetal12} showed that for the adaptive SSClass problem in the unit-cost case,
optimal expected verification cost equals optimal expected evaluation cost.
Prior work already implied this was true for evaluating $k$-of-$n$ functions, even
for arbitrary costs (cf.~\cite{unluyurtBorosDoubleRegular}).
We give a counterexample in the full paper~\cite{} showing that this relationship does not hold for the adaptive SSClass problem with arbitrary costs.
Thus algorithmic approaches based on optimal verification strategies may not be effective for these problems.

There remain many intriguing open questions related to
SSClass problems.
The first, and most fundamental, is whether the (adaptive or non-adaptive) SSClass
problem is NP-hard.
This is open even in the unit-cost case.
It is unclear whether this problem will be easy to resolve.
It is easy to show that the weighted variants
are NP-hard: this follows from the NP-hardness of the SBFE problem for
linear threshold functions, which is proved by a simple reduction from knapsack~\cite{DeshpandeGoalValue}.
However, the approach used in that proof is to show that the deterministic version of
the problem (where query answers are known a-priori) is NP-hard, which is not the case in the SSClass problem.
Further, NP-hardness of evaluation problems is not always easy to determine.
The question of whether the SBFE problem for read-once formulas is NP-hard has been
open since the 1970's (cf.~\cite{Greiner06}).

Another main open question is whether there is a constant-factor approximation algorithm
for the weighted SSClass problem.  Our bounds depend on $n$ or $B$.
Other open questions concern lower bounds on approximation factors, and bounds on adaptivity gaps.

\section{Table of Results}
\label{append:table}

\begin{table}[H]
\caption{Results for the adaptive SSClass Problem}
\label{table:adaptive}
\begin{tabularx}{1.25\linewidth}{ X | X| X}
& unit costs & arbitrary costs \\
\hline
weighted & $O(\log W)$-approx [Sec. ~\ref{sec:general}];  \newline $3(B-1)$ [Sec. ~\ref{sec:general}]& 
$O(\log W)$-approx [Sec. ~\ref{sec:general}];  \newline $3(B-1)$ [Sec. ~\ref{sec:general}]\\
\hline
unweighted & 4-approx [Sec. ~\ref{sec:rrnonadaptive},~\ref{sec:4approx}] & $O(\log n)$-approx; 
\newline $(B-1)$-approx [Sec. ~\ref{sec:general},~\ref{sec:B-1approx}]\\ 
\hline
$k$-of-$n$ function & exact algorithm [known] & exact algorithm [known] \\
\hline
unanimous vote \newline function & exact algorithm [Sec.~\ref{sec:unanimousadaptive}] & exact algorithm [Sec.~\ref{sec:unanimousadaptive}]
\end{tabularx}
\end{table}

\begin{table}[ht]
\caption{Results for the non-adaptive SSClass problem}
\label{table:non-adaptive}
\begin{tabularx}{\textwidth}{ c | c | c}
& unit costs & arbitrary costs \\
\hline
weighted & open & open  \\
\hline
unweighted & 4-approx 
[Sec.~\ref{sec:rrnonadaptive},~\ref{sec:4approx}] & $2(B-1)$-approx [Sec.~\ref{sec:rrnonadaptive},~\ref{sec:nonadaptive2B-1}] \\
\hline
$k$-of-$n$ function & 2-approx [Sec.~\ref{sec:rrnonadaptive}]  & 2-approx [Sec.~\ref{sec:rrnonadaptive}]  \\
\hline
unanimous vote function & $\varphi$-approx [Sec.~\ref{sec:unanimousnonadaptive}] & 2-approx  [Sec.~\ref{sec:unanimousnonadaptive}]
\end{tabularx}

\end{table}
\newpage

\section{Further definitions and background}

A \emph{partial assignment} is a vector $b\in\partassign$. 
We use $f^{b}$ to denote the restriction of function $f(x_1, \ldots, x_n)$ to the bits $i$ with $b_i = *$,
produced by fixing the remaining bits $i$ according to their values $b_i$.
We call $f^b$ the function {\em induced from $f$ by partial assignment $b$}.
We use $N_{0}(b)$ to denote $|\{i|b_i=0\}|$, and $N_{1}(b)$ to denote $|\{i|b_i=1\}|$.

A partial assignment $b'\in\partassign$ is an \emph{extension} of
$b$, written $b'\succeq b$, if $b'_{i} = b_{i}$ for all $i$ such that
$b_{i} \neq *$. We use
$b' \succ b$
to denote that $b' \succeq b$ and $b' \neq b$.



A partial
assignment encodes what information is known at a given
point in a sequential querying (testing) environment. Specifically, for partial assignment $b\in\partassign$, $b_{i}=*$ indicates that
query $i$ has not yet been asked, otherwise $b_i$ equals the
answer to query $i$. We may also refer to query $i$ as {\em test} $i$, and to asking query $i$ as testing or querying bit $x_{i}$, 

Suppose the costs $c_i$ and probabilities $p_i$ for the $n$ queries are
fixed.  We define the expected costs of adaptive evaluation and
verification strategies for $\boolfunc$ or $\classfunc$ as follows.  (The definitions for
non-adaptive strategies are analogous.)
Given an adaptive evaluation strategy ${\cal A}$ for
$f$, and an assignment $x \in \fullassign$, we use
$C({\cal A},x)$ to denote the sum of the costs of the tests performed in using ${\cal A}$ on $x$.  The expected cost of ${\cal A}$ is $\sum_{x \in \fullassign} C({\cal A},x)p(x)$,
where $p(x)=\prod_{i=1}^n p^{x_i}(1-p)^{1-x_i}$.  We say that ${\cal A}$ is an {\em optimal} adaptive
evaluation strategy for $f$ if it has minimum possible expected cost.

Let $L$ denote the range of $f$, and for $\ell in L$, let $\mathcal{X}_{\ell}=\{x\in\fullassign:f(x)=\ell\}$.
An {\em adaptive verification strategy} for $f$ consists of $|L|$ adaptive evaluation strategies ${\cal A}_{\ell}$ for $f$, one for each ${\ell} \in L$.   The expected cost of the verification strategy is 
$\sum_{{\ell} \in L} \left ( \sum_{x \in \mathcal{X}_{\ell}}C({\cal A}_{\ell},x)p(x) \right )$
and it is optimal if it minimizes this expected cost.

If ${\cal A}$ is an evaluation strategy for $f$, we call $\sum_{x \in \mathcal{X}_{\ell}} C({\cal A},x)p(x)$ the $\ell$-cost of ${\cal A}$.
For $\ell \in L$, we say that ${\cal A}$ is {\em $\ell$-optimal} if it has minimum possible $\ell$-cost.
In an optimal verification strategy for $f$, each component evaluation strategy
${\cal A}_{\ell}$ must be $\ell$-optimal.

A Boolean function $\boolfunc$ is symmetric if its output on $x \in \{0,1\}^n$ depends only on $N_1(x)$.
Let $f$ be a symmetric Boolean function $\boolfunc$, or an \emph{unweighted} score classification function $\classfunc$. 
The \emph{value vector} for $f$ 
is the $n+1$ dimensional vector $v^f$, indexed from $0$ to $n$,
whose $j$th entry $v_j^f$ is the value of $f$ on inputs $x$ where $N_1(x) = j$.
We partition value vector $v^{f}$ into
\emph{blocks}. A
block is a maximal subvector of $v^{f}$ such that entries of the
subvector have the same value. 
If $f$ is a score classification function, the blocks correspond to the score intervals,
and block $i$ is the subvector of $v^{f}$ containing the entries in $[\alpha_i, \alpha_{i+1})$.
For $f$ a Boolean function, we define the $\alpha_i$ so that $0=\alpha_1 < \alpha_2 < \leq < \alpha_{B+1}=n+1$ and block $i$ 
is the subvector containing the indices in the interval $[\alpha_i, \alpha_{i+1})$.

We say that assignment $x$ is in the $i$th block if $N_{1}(x)$ is in the interval
$[\alpha_i, \alpha_{i+1})$.
 
With each block $i$ of $v^f$, we associate a function $f^i$, where $f^i(x) = 1$ if $x$ is in block $i$, and $f^i(x)=0$ otherwise.
A \emph{verification strategy for block $i$} is an evaluation strategy for $f^i$.  An \emph{optimal verification strategy for block $i$} is an evaluation strategy for $f^i$ with minimum 1-cost.  

A function $\utilfunc$ is {\em monotone} if $g(b') \geq g(b)$ whenever $b' \succeq b$.  It is {\em submodular} if for $b' \succeq b$, $i$ such that $b'_i = b_i = *$, and $k \in \{0,1\}$, we have $g(b'_{i \leftarrow k}) - g(b') \leq g(b_{i \leftarrow k}) - g(b)$.  Here $b_{i \leftarrow k}$ denotes the partial assignment produced from $b$ by setting $b_i$ to $k$, and similarly for $b'_{i \leftarrow k}$.

\section{Algorithms for the weighted adaptive SSClass problem}
\label{sec:general}
Our first algorithm solves the weighted adaptive SSClass Problem using
the {\em goal value approach} of Deshpande et al.,
a method of designing approximation algorithms for SBFE problems~\cite{DeshpandeGoalValue}.  The approach can easily be extended to the weighted adaptive SSClass problem.
It requires construction of a utility function $\utilfunc$, called a {\em goal function}, associated
with the function $f$ being evaluated.
Function $g$ must be monotone and submodular.  The
maximum value of $g$ must be an integer $Q \geq 0$
such that $g(b) = Q$ iff $f(x)$ 
has the same value for all $x \in \{0,1\}^n$ such that $x\succeq b$. We call $Q$ the \emph{goal value} of
$g$.

An adaptive strategy for evaluating $f$ can then be obtained by applying the Adaptive Greedy algorithm of Golovin and Krause to solve the Stochastic Submodular Cover problem on goal function $g$~\cite{golovinKrause}.  This algorithm greedily chooses the query with highest expected increase in utility, as measured by $g$, per unit cost.  
It follows from the bound of Deshpande et al. on Adaptive Greedy for Stochastic Submodular Cover, that this strategy is an $O(\log Q)$-approximation to the optimal adaptive strategy for evaluating $f$~\cite{DeshpandeGoalValue}.\footnote{Golovin and Krause originally claimed an $O(\log Q)$ bound for Stochastic Submodular Cover~\cite{golovinKrause}, but the proof was recently found to have an error~\cite{nanSaligrama}.  They have since posted a new proof  with an $O(\log^2 Q)$ bound~\cite{golovinKrauseArxivv5}. Deshpande et al.\ proved an $O(\log Q)$ bound using a different proof technique~\cite{DeshpandeGoalValue}.}

We construct $g$ as follows.
Let $r(x)=a_1 x_2 + \ldots + a_nx_n$.  Consider an associated score classification function $f$ defined by 
$\alpha_1, \ldots, \alpha_{B+1}$ and the
$a_i$.
For simplicity, we assume here that the $a_i$ are non-negative.  (The general case is similar.)  
We refer to the values $\alpha_2, \ldots, \alpha_B$ as \emph{cutoffs}.
For each cutoff $\alpha_j$, let $f_j$ denote the Boolean linear threshold function 
$f_j:\{0,1\}^n \rightarrow \{0,1\}$ where $f_j(x) = 1$ if $r(x) \geq \alpha_j$, and $f_j(x)=0$ otherwise.  

Consider a fixed cutoff $\alpha_j$.  Let $\omega = (\sum_i a_i)-\alpha_j+1$. For $b \in \{0,1,*\}^n$, let $r^1(b) = \min\{\alpha_j,\sum_{i:b_i=1} a_i\}$ and $r^0(b) = \min\{\omega,\sum_{i:b_i=0} a_i\}$. 
Note that $r^1(b)=\alpha_j$ iff $f_j(x) = 1$ for all $x \succeq b$, and $r^0(b) = \omega$ iff $f_j(x) = 0$ for all $x \succeq b$.  As shown in~\cite{DeshpandeGoalValue}
the following function $g_j$ is a goal function
for linear threshold function $f_j$, with goal value $\omega\alpha_j$: 
\begin{equation}
  \label{eq:kofnutilfunc}
    g_j(b)= =\omega\alpha_j- (\alpha_j-r^1(b))(\omega-r^0(b)).
\end{equation}



We combine the $B-1$ goal functions $g_j$ using the standard ``AND construction'' for utility functions (cf.~\cite{DeshpandeGoalValue}), which yields a goal
function $g$ for pseudo-Boolean function $f$, where $  g(x) = \sum_{i=1}^{B-1} g_{i}(x).$ Its goal 
  value is at most $(B-1)W^2$ where $W = \sum_i a_i$.






To evaluate $f$, we apply the Adaptive Greedy algorithm to $g$. By the $O(\log Q)$ approximation bound on Adaptive Greedy, this constitutes an algorithm for the adaptive weighted SSClass problem with approximation factor $O(\log BW^2)$, which is $O(\log W)$ since $B \leq W$. In the (unweighted) adaptive SSClass problem, $W=n$, so the approximation factor is $O(\log n)$.

We now describe our simple $B-1$ approximation algorithm for the adaptive unweighted SSClass problem, which takes a very different approach.  It runs the $k$-of-$n$ function evaluation algorithm $B-1$ times, each time setting $k$ to be a different cutoff  $\alpha_j$.  The resulting evaluations are sufficient to determine the correct output class.  The proof that this algorithm achieves a $B-1$ approximation bound is based on the observation that any strategy solving the adaptive SSClass problem is implicitly a strategy for solving each of the $B-1$ induced $k$-of-$n$ problems.  Since we use an optimal algorithm for solving each of those problems, this implies the $B-1$ approximation bound.  
Further details are given in the appendix. When $B$ is small, as for, e.g.,
$k$-of-$n$ functions and the Unanimous Vote function, $B-1$ is a good
approximation. Otherwise, the $O(\log n)$ approximation achieved with the goal value approach may be better.  

By similar arguments, the following is a $3(B-1)$ approximation for the adaptive weighted problem.
For each cutoff $\alpha_j$, use the 3-approximation algorithm of Deshpande et al.\ to evaluate 
linear threshold function $f_j$.  

Combining the above results, we have the following theorem.

\begin{theorem}
  \label{thm:SLSCgoalalg}
  There are two polynomial-time approximation algorithms achieving approximation factors of $O(\log W)$ and $3(B-1)$ respectively for the weighted adaptive SSClass problem.  
There is a polynomial-time algorithm that achieves a $B-1$-approximation for the unweighted adaptive SSClass problem.  
\end{theorem}



\section{Constant-factor approximations for unit-cost problems}
 We begin by reviewing relevant existing techniques.
 
 \subsection{Adaptive Evaluation of k-of-n Functions}

\label{subsec:kofn}
An optimal adaptive strategy, when $f$ is a $k$-of-$n$ function, was given by Salloum, Ben-Dov, and Breuer~\cite{Salloum79,BenDov81,SalloumBreuer84,Chang,Salloum97}. 
The difficulty in finding an optimal strategy is that you do not know a-priori whether the value of $f$ will be 1 or 0.  If 1, then (ignoring cost) it seems it would be better to choose queries with high $p_i$, since you want to get $k$ 1-answers.  Similarly, if 0, it seems it would be better to choose queries with low $p_i$.  The algorithm of Salloum et al.\ is based
on showing that when $f$ is a $k$-of-$n$ function, a 1-optimal strategy is to query the bits in increasing order of $c_i/p_i$ until getting $k$ 1's, while a 0-optimal strategy is to query them in increasing order of $c_i/(1-p_i)$ until getting $n-k+1$ 0's. 

Since the 1-optimal strategy must perform at least the first $k$ tests before terminating, these can be reordered within this strategy without affecting its optimality.  Similarly, the first $n-k+1$ queries of the 0-optimal strategy can be reordered without affecting optimality.
The strategy of Salloum et al.\
is as follows.  If $n=1$, test the one bit.  Else let $S_{1}$ denote the set
of the $k$ bits with smallest $c_{i}/p_{i}$ values.
Let $S_{0}$ denote the set of the $n-k+1$ bits with
smallest $c_{i}/(1-p_{i})$ values.  Since
$|S_{0}|+|S_{1}|=n+1$, by pigeonhole 
$S_{0} \cap S_{1} \neq \emptyset$. Test a bit in $S_0 \cap S_1$. If it is 1,
the problem is reduced to
evaluating the function $f^{1}\colon \bit^{n-1}\to\bit$ where
$f^{1}(x)=1$ iff $N_{1}(x) \geq k-1$. If it is 0,
the problem is reduced to evaluating $f^{0}\colon \bit^{n-1}\to\bit$ where $f^{0}(x)=1$ iff
$N_{1}(x)\geq k$. Recursively evaluate $f^{1}$ or $f^{0}$ as appropriate. Optimality follows from the fact that the chosen bit is an optimal first bit to test in both 0-optimal and 1-optimal strategies.

\subsection{Modified Round Robin}
\label{sec:MRR}
Allen et al.\ \cite{allen2017evaluation} presented a modified
round robin protocol, which is useful in designing non-adaptive strategies when test costs are not all equal.  Suppose that in a sequential testing environment
with $n$ tests, we have $M$ conditions on test outcomes, corresponding
to $M$ predicates on the partial assignments in $\partassign$.
For example, in the $k$-of-$n$ testing problem, we are interested in the following $M=2$ predicates on partial assignments:
(1) 
having at least $k$ ones and (2) having at least $n-k+1$ zeros.
Suppose we are given a testing strategy for each of the $M$
predicates; a strategy stops testing when its predicate is satisfied (by the partial assignment representing
test outcomes), or all tests have been performed.  Let
$\mathrm{Alg}_{1},\dots,\mathrm{Alg}_{M}$ denote those $M$ strategies.
The modified round robin algorithm of Allen et al.\ interleaves
execution of these strategies.  We present a modified version of their algorithm in Algorithm~\ref{alg:roundrobin};  the difference is that their algorithm terminates as soon as one of the
predicates is satisfied, while Algorithm~\ref{alg:roundrobin} terminates when all are satisfied.  

\begin{algorithm}[ht]
\caption{ Modified Round Robin of $M$ Strategies}
\label{alg:roundrobin}
	\begin{algorithmic}
   		\STATE Let $C_{i}\gets 0$ for $i=1,\dots, M$; let $d \gets (*^{n})$
        \WHILE{at least one of the $M$ testing strategies has not terminated}
			\STATE Let $j_{1},\dots,j_{M}$ be the next tests of $\mathrm{Alg}_{1},\dots, \mathrm{Alg}_{M}$ respectively
            \STATE Let  $i^{*} \gets \argmin\limits_{i\in\{1,\dots,M\}} (C_{i}+c_{j_{i}})$
            \STATE Let $t\gets j_{i^{*}}$; let $C_{i^{*}}\gets C_{i^{*}} + c_{t}$
			\STATE Perform test $t$ and set $d_{t}$ to the newly determined value of
  bit $t$
 		\ENDWHILE
	\end{algorithmic}
\end{algorithm}


%
Allen et al.\ showed that
the modified round robin incurs a cost on $x$ that is at most $M$ times the cost incurred by $\mathrm{Alg}_j$ on $x$.
We will use variations on this algorithm and this bound to derive approximation factors for our SSClass problems.

\subsection{A Round Robin Approach to Non-adaptive Evaluation}
\label{sec:rrnonadaptive}
We now present an algorithm for the unit-cost case of the non-adaptive, unweighted SSClass problem.  The pseudocode is presented
in Algorithm~\ref{alg:arbRR}, 
with $\mathrm{Alg}_{1}$ denoting the strategy performing tests in
increasing order of $c_i/p_i$ and 
$\mathrm{Alg}_{0}$ denoting the strategy performing tests in increasing
order of $c_i/(1-p_i)$.  We prove the following theorem.

\begin{algorithm}[ht]
  \caption{Non-adaptive Round Robin Algorithm for SSClass}
  \label{alg:arbRR}
  \begin{algorithmic}
    \STATE Let $C_{0}\gets 0$, $C_{1}\gets 0$
    \STATE Let $d\gets *^{n}$
    \REPEAT
    \STATE Let $j_{0}\gets$ next bit from $\mathrm{Alg}_{0}$
    \STATE Let $j_{1}\gets$ next bit from $\mathrm{Alg}_{1}$
    \STATE Let $j^{*}\gets \argmin_{i\in\bit} C_{i}+c_{j_{i}}$
    \STATE Query bit $i^{*}$ and set $d_{j^{*}}$ to the discovered value
    \UNTIL induced function $f^{d}$ is a constant function 
    \RETURN The constant value of $f^{d}$
  \end{algorithmic}
\end{algorithm}

\begin{theorem}
  \label{thm:arbRR4approx}
  When all tests have unit cost, the expected cost incurred by the non-adaptive Algorithm~\ref{alg:arbRR} is at most 4 times the expected cost of an optimal adaptive strategy for the unweighted adaptive SSClass problem. 
\end{theorem}

By Theorem~\ref{thm:arbRR4approx}, Algorithm~\ref{alg:arbRR} is a 4-approximation for the adaptive \emph{and} non-adaptive versions of the unit-cost unweighted SSClass problem.  
The theorem also implies an upper bound of 4 on the adaptivity gap for this problem.
A simpler analysis 
shows that for arbitrary costs, Algorithm~\ref{alg:arbRR}
achieves an approximation factor of $2(B-1)$ for the non-adaptive version of the problem.
Since the $k$-of-$n$ functions are essentially equivalent to score classification functions with $B=2$, 
the $2(B-1)$-approximation is a 2-approximation for non-adaptive $k$-of-$n$ function evaluation.




\subsection{The Unanimous Vote Function: Adaptive Setting}
\label{sec:unanimousadaptive}
Adaptive evaluation of the Unanimous Vote function
function can be done optimally using the following simple idea.  Recall that 
querying the bits in increasing $c_{i}/p_{i}$ order is optimal for evaluating OR, while querying in increasing $c_{i}/(1-p_{i})$ is optimal for AND.
Now consider the problem of adaptively evaluating the unanimous vote function. Suppose we know the
optimal choice for the first test. 
After the first test, we have an induced SSClass
problem on the remaining bits. If the
first test has value 0, the induced function is equivalent to Boolean OR (mapping UNCERTAIN to 1, and NEGATIVE to 0). 
The subtree rooted at the root node's 0-child should be the 
optimal tree for evaluating OR. Specifically, the remaining bits should be tested in increasing
order of $c_{i}/p_{i}$. 
If, instead, the first bit is 1, the induced function is
equivalent to AND (mapping UNCERTAIN to 0 and POSITIVE to 1) and the remaining bits 
should be queried in increasing
order of $c_{i}/(1-p_{i})$. 

Since we don't actually know the first bit, we can just try each bit as the root and build the rest of the tree according to the optimal OR and AND strategies. We can then calculate the expected cost of each tree, and output the tree with minimum expected cost. 

For succinctness, the optimal OR and AND strategies can be represented by paths, because each performs tests in a fixed order.   
Figure~\ref{fig:optT} shows an example of the strategy computed by the algorithm, where the root is labeled $x_{0}$ and the OR
permutation is the reversal of the AND permutation (which occurs, for
example, with unit costs).



\subsection{A Non-adaptive $\varphi$-approximation for the Unanimous Vote
  Function}
  \label{sec:unanimousnonadaptive}

\begin{figure}[ht]
  \centering
  \includegraphics[scale=0.8]{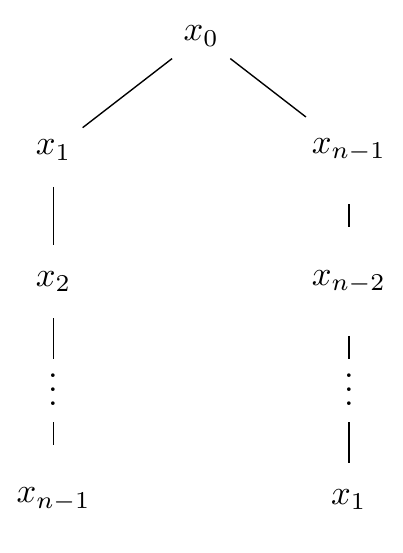}
\caption[The decision tree $T$]{Decision tree $T$ representing optimal
  adaptive strategy with root $x_0$}
\label{fig:optT}
\end{figure}

A simple modification of the round robin makes the algorithm from the previous section non-adaptive, yielding a 2-approximation.
But we now show how to achieve a non-adaptive $\varphi$-approximation in the unit-cost case, where
$\varphi = \frac{1+\sqrt{5}}{2} \approx 1.618$ is the golden ratio. 
We call the algorithm Truncated Round Robin.
We describe the algorithm by describing a subroutine which generates a
permutation of input bits to query, given an initial (root) bit. The
algorithm then tries all possible bits for the root and chooses the
resulting permutation that achieves the lowest expected cost.

Without loss of generality, assume the first bit (the root
node) is $x_{0}$, and the rest are
$x_{1},\dots,x_{n-1}$, and 
$1 > p_{1} \geq p_{2} \geq \dots \geq p_{n-1} > 0$.
Fix $c$ to be a constant such that $0 < c < \frac{1}{2}$.
\begin{algorithm}[ht]
\caption{Truncated Round Robin Subroutine for Unanimous Vote Fn}
\label{alg:NAERR}
	\begin{algorithmic}
        \REQUIRE{$1 > p_{1} \geq p_{2} \geq \dots \geq p_{n-1}$}
   		\STATE Query bit $x_0$
        \STATE Let level $l \gets 1$
        \WHILE {$p_{n-l}<1-c$ \AND $p_l > c$ \AND evaluation unknown}
          \IF {$\left|p_{l}-0.5\right| < \left|p_{n-l}-0.5\right|$}
            \STATE Query $x_{l}$ followed by $x_{n-l}$
          \ELSE \STATE Query $x_{n-l}$ followed by $x_{l}$
          \ENDIF
          \STATE $l \gets l+1$
        \ENDWHILE
        \COMMENT{first phase: alternate branches of tree}
        \WHILE {evaluation unknown}
        	\IF {$p_{l} \geq p_{n-l} \geq 1-c$}
               \STATE Query $x_{n-l}$
            \ELSIF {$c\geq p_{l}\geq p_{n-l}$}
               \STATE Query $x_{l}$
            \ENDIF
            \STATE $l\gets l+1$
        \ENDWHILE
        \COMMENT{second phase: single branch in tree}
	\end{algorithmic}
\end{algorithm}

The subroutine is shown in Algorithm~\ref{alg:NAERR}. 
``Evaluation unknown'' means tests so far were insufficient to determine the output of the Unanimous Vote function. (The output, POSITIVE, NEGATIVE, or UNCERTAIN, is not shown.)

Given 
$x_{0}$ as the root, the optimal adaptive strategy
continues with the OR strategy
(increasing $1/p_{i}$) when $x_{0}=0$, and the AND strategy (increasing $1/(1-p_{i}$)) when $x_{0}=1$. This is shown in
Figure~\ref{fig:optT}, where  $x_{0}=0$ is 
the left branch and  $x_{0}=1$ is the
right.  On the left, we stop querying when we find
a bit with value 1 (or all bits are queried). On the right, we stop when we find a bit with value
0.

Let ``level $l$'' refer to the  tree nodes at distance $l$ from the root; namely, $x_{l}$ and $x_{n-l}$. 
When all costs are 1, the standard round robin technique of the previous section 
in effect tests, for $l=1 \ldots \lceil \frac{n-1}{2}\rceil$,
the bit $x_{l}$ followed by $x_{n-l}$. Note that the algorithm will terminate
by level $\lceil \frac{n-1}{2}\rceil$ because at this point all bits will 
have been queried.

 In the
Truncated Round Robin, we proceed level by level, in two phases. The
first phase concludes once we reach a level $l$ where
$p_{l} > p_{n-l} \geq 1-c$ or $c\geq p_{l} > p_{n-l}$. Let 
$\ell$ denote this level.
In the first phase, we test both $x_{l}$ and $x_{n-l}$, testing first
the variable whose probability is closest to $\frac{1}{2}$. In the
second phase, we abandon the round robin and instead continue down a
single branch in the adaptive tree.
Specifically, in the second phase, if
$p_{l} > p_{n-l} \geq 1-c$, then we continue down the right branch,
testing the remaining variables in increasing order of $p_{i}$. If
$c \geq p_{l} > p_{n-l}$, then we continue down the left branch, 
testing the remaining variables in decreasing order of $p_{i}$.  Fixing $c= \frac{3 - \sqrt{5}}{2}\approx 0.381966$ in the algorithm, the following holds.

\begin{theorem}
  \label{thm:phiapprox}
  When all tests have unit cost, the Truncated Round Robin Algorithm achieves an approximation
  factor of $\varphi$  for non-adaptive evaluation of the Unanimous Vote function.
\end{theorem}

\begin{proof}
Consider the optimal adaptive strategy $T$.  It tests a bit $x_0$ and then follows the optimal AND or OR strategy depending on whether $x_0=1$ or $x_0 = 0$.  Assume the other bits are indexed so $p_1 \geq p_2 \geq \ldots \geq p_{n-1}$.  Thus $T$ is the tree in Figure~\ref{fig:optT}.  Let $C^*_{adapt}$ be the expected cost of $T$.  Let $C^*_{non-adapt}$ be the expected cost of the optimal non-adaptive strategy.  Let $C_{i,TRR}$ be the cost of running the TRR subroutine in (Algorithm~\ref{alg:NAERR}) with root $x_i$.  We use $x_0$ to denote the root of $T$.
Since the TRR algorithm tries all possible roots, its output strategy has expected cost $\min_i C_{i,TRR}$.
We will prove the following claim: $C_{0,TRR} \leq \varphi C^*_{adapt}$. Since the expected cost of the optimal adaptive strategy is bounded above by the expected cost of the
 optimal non-adaptive strategy, the claim implies 
that $\min_i C_{i,TRR} \leq  C_{0,TRR} \leq \varphi C^*_{adapt}$.  Further,
$C^*_{adapt} \leq \varphi C^*_{non-adapt}$, which proves the theorem.

We now prove the claim.
We will write the expected cost of the TRR (with root $x_0$) as
  $C_{0,TRR}=1 + E_{1} + (1-P_{1})E_{2}$.
Here, $E_{1}$ is the expected number of bits tested 
in $T$ in the first phase (i.e.\ in levels $l<\ell$),
$E_{2}$ is the expected number of variables tested among levels
in $T$ in the second phase (levels $l\geq \ell$), given that the second phase is reached,
and $P_{1}$ is the probability of ending during the first phase.  Note that the value of ${\ell}$ is determined only by the values of the $p_i$, and it is independent of the test outcomes.

We will write the expected cost of $T$ (the adaptive tree which is optimal w.r.t\ all
trees with root $x_{0}$) as $C^*_{adapt} = 1 + E'_{1} + (1-P'_{1})E'_{2}$
where $E'_{1}$ is the expected number of bits queried in $T$ before level ${\ell}$, $P'_{1}$ is the probability of ending before level ${\ell}$, and $E'_{2}$ is the expected number of bits queried in levels ${\ell}$ and higher, given that ${\ell}$ was reached.

To prove our claim, we will upper bound the ratio $\alpha := \frac{1 + E_{1} + (1-P_{1})E_{2}}
  {1 + E'_{1} + (1-P'_{1})E'_{2}}.$
Recall that since $c < 1/2$, we have $c < 1-c$.
Also, the first phase ends if all bits have been tested,
which implies that for all $l$ in the first phase, $l \leq \lceil (n-1)/2 \rceil$ so
$p_{n-l} \leq p_{l}$.
We break the first phase
into two parts: (1) The first part consists of all levels $l$ where
$p_{n-l} \leq c < 1-c \leq p_{l}$.
(2) The second part consists of all levels $l$ where $p_{l}\in (c, 1-c)$ or $p_{n-l}\in (c,1-c)$, or both.

Let us rewrite the expected cost $E_{1}$ as $E_{1} = E_{1,1} + (1-P_{1,1})E_{1,2}$.
where $E_{1,1}$ is the expected cost of the first part of phase 1,
$E_{1,2}$ is the expected cost of the second part of phase 1, and
$P_{1,1}$ is the probability of terminating during the first part of
phase 1. Analogously for the cost on tree $T$, we can rewrite
  $E'_{1} = E'_{1,1} + (1-P'_{1,1})E'_{1,2}$.
Then, the ratio we wish to upper bound becomes
  $\alpha = \frac{1 + E_{1,1} + (1-P_{1,1})E_{1,2} + (1-P_{1})E_{2}}
  {1 + E'_{1,1} + (1-P'_{1,1})E'_{1,2} + (1-P'_{1})E'_{2}}$
which we will upper bound by examining the three ratios
  $\theta_1 := \frac{1+E_{1,1}}{1+E'_{1,1}}$,  $\theta_2:=\frac{(1-P_{1,1})E_{1,2}}{(1-P'_{1,1})E'_{1,2}}$ and $\theta_3:=\frac{(1-P_{1})E_{2}}{(1-P'_{1})E'_{2}}$.
  
  For ratio $\theta_1$, notice that the TRR does at most two tests for
every tree level, so $E_{1,1} \leq 2E'_{1,1}$, and thus
$\frac{1+E_{1,1}}{1+E'_{1,1}} \leq
\frac{1+2E'_{1,1}}{1+E'_{1,1}}$. Also,
$\frac{\mathrm{d}}{\mathrm{d}\,x}\left(\frac{1+2x}{1+x}\right) = \frac{1}{(1+x)^{2}} >
0$ for $x>0$. For each path in tree $T$, for
the levels in the first part of the first phase, the probability of
getting a result that causes termination is at least $1-c$. This is
because in the first part,
$p_{l} \geq 1-c > c \geq p_{n-l}$.  If we are taking the left branch (because $x_0=0$) we terminate when we get a test outcome of 1, and on the right ($x_0=1$), we terminate when we get a test outcome of 0. Each bit queried is an independent
Bernoulli trial, so $E'_{1,1} \leq \frac{1}{1-c}$. Because $\frac{1+2x}{1+x}$ is increasing, we can
assert that
  $\theta_1 = \frac{1+E_{1,1}}{1+E'_{1,1}} < \frac{1+2(1-c)^{-1}}{1+1(1-c)^{-1}} = \frac{3-c}{2-c}$.

Next we will upper bound the second ratio $\theta_2$. Let $P(l)$ represent the
probability of reaching level $l$ in the TRR. Further, let $q_{l}$
represent the probability of querying the second bit in level $l$
given that we have reached level $l$. Then, observe that
$(1-P_{1,1})E_{1,2}$ can be written as the sum over all levels $l$ in
phase 1, part 2 of $P(l)(1 + q_{l})$. Note that in phase 1, the first
bit queried is the bit $x_{i}$ such that $p_{i}$ is closest to $0.5$.
Notice also that in the second part of the first phase, each level has
at least one variable $x_{i}$ such that $p_{i}\in (c, 1-c)$. This also
means that $1-p_{i} \in (c, 1-c)$. This means that the first test
performed in any given level in phase 1, part 2 will cause the TRR to
terminate with probability at least $c$. This means that for each
level $l$ in this part of the TRR, we will have $q_{l} \leq 1-c$.

Similarly, $(1-P'_{1,1})E'_{1,2}$ is the sum over all
levels $l$ which comprise phase 1, part 2 in the TRR of $P'(l)$. Here,
$P'(l)$ is defined as the probability of reaching level
$l$ in tree $T$. We do not multiply by $1+q_{l}$ since in the
evaluation of $T$ we only perform one test at each level.

Consider the evaluation of tree $T$ on an assignment. If the
evaluation terminates upon reaching level $l$ in the tree,
for $l < {\ell}$, then the
evaluation using the TRR must terminate at a level $l' \leq l$. That
is, the TRR will terminate at level $l$ \emph{or earlier} for the same
assignment. Thus, we get that $P(l) \leq P'(l)$. Using this, we can
achieve the following bound on the second ratio (letting $S_{2}$
denote the set of all levels included in the second part of phase 1):~
$\theta_2 = \frac{(1-P_{1,1})E_{1,2}}{(1-P'_{1,1})E'_{1,2}} = \frac{\sum_{l\in S_{2}} P(l)(1+q_{l})}{\sum_{l\in S_{2}} P'(l)} \leq \frac{\sum_{l\in S_{2}}P(l)(1+1-c)}{\sum_{l\in S_{2}}P(l)} = 2-c$.

Finally, we wish to upper bound the last ratio,
$\theta_3=\frac{(1-P_{1})E_{2}}{(1-P'_{1})E'_{2}}$. Let $l^{*}={\ell}$ denote the
first level included in the second phase of the TRR. Without loss of
generality, assume that $c \geq p_{l^{*}} \geq p_{n-l^{*}}$ so that in
the TRR, the second phase queries the remaining bits in decreasing
order of $p_{i}$. Thus, all bits $x_{i}$ queried in the second phase
satisfy $p_{i} \leq c$. (The argument is symmetric for the case where
$p_{l^{*}} \geq p_{n-l^{*}} \geq 1-c$).

In this case, any assignments that do not cause termination in the TRR
during the first phase, and that have $x_0 = 0$ (i.e., they would go
down the left branch of $T$), will follow the same path through
the nodes in left branch, for levels ${l^*}$ and higher, that they would have followed in the optimal strategy $T$. (In fact, tests from the
right branch of the tree that were previously performed in phase 1 of the TRR 
do not have to be repeated.)

The numerator of the third ratio $\theta_3$ is equal to the sum, over all
assignments $x$ reaching level $l^*$ in the TRR, of $Pr(x)C_{2}(x)$,
where $C_{2}(x)$ is the total cost of all bits queried in phase 2 for
assignment $x$. Let $Q_0$ be the subset of assignments reaching
level $l^*$ in the TRR which have $x_0 = 0$ and let $Q_1$ be the
subset of assignments reaching level $l^*$ in the TRR which have
$x_0 = 1$. Let $D_{0}$ represent the sum over all assignments in
$Q_{0}$ of $Pr(x)C_{2}(x)$ and let $D_{1}$ represent the sum over all
assignments in $Q_{1}$ of $Pr(x)C_{2}(x)$. Then,
letting $S_{l^{*}}$ represent the set
of assignments reaching level $l^{*}$ in the TRR,
we can rewrite the
numerator of the third ratio as $\sum_{x\in S_{l^{*}}}Pr(x)C_{2}(x) =
  \sum_{x\in Q_{0}}Pr(x)C_{2}(x) + \sum_{x\in Q_{1}}Pr(x)C_{2}(x) =  D_{0} + D_{1}$.
  
  The denominator of the third ratio is the sum, over all assignments
$x$ reaching level $l^*$ in the tree, of $Pr(x)C'_{2}(x)$, where
$C'_{2}(x)$ is the total cost of all bits queried in tree $T$ at
level $l^{*}$ and below. Let $S'_{l^{*}}$ denote the set of
assignments $x$ reaching level $l^{*}$ in tree $T$. Next, observe
that $S_{l^{*}}\subseteq S'_{l^{*}}$ since any assignment that reaches
level $l^{*}$ in the TRR must also reach level $l^{*}$ in the tree. We
can again rewrite the denominator as
  $\sum_{x\in S'_{l^{*}}}Pr(x)C'_{2}(x) \geq \sum_{x\in S_{l^{*}}}Pr(x)C'_{2}(x) = B_{0} + B_{1}$
where $B_{0} = \sum_{x\in Q_{0}}Pr(x)C'_{2}(x)$ and
$B_{1} = \sum_{x\in Q_{1}}Pr(x)C'_{2}(x)$. 
The third ratio $\theta_3$ is thus upper bounded by
  $\frac{(1-P_{1})E_{2}}{(1-P_{1})E_{2}} \leq \frac{D_{0}+D_{1}}{B_{0}+B_{1}}$.
  
   For any $x\in Q_{0}$, the number of bits queried 
in level $l^{*}$ or below in the TRR is less than or equal to the
number of bits queried on $x$ in level $l^{*}$ or below in the
tree.
Thus $D_{0} \leq B_{0}$.

For $x\in Q_{1}$, the number of bits queried at level
$l^{*}$ or below is at least one. Thus $B_{1} \geq J_{1}$, where
$J_{1}$ is the probability that a random assignment $x$ has $x_{0}=1$
and reaches level $l^{*}$.

Note that TRR will terminate on an assignment with $x_{0}=1$ when
it first tests a bit that has value 0. Also note that each bit
$x_{i}$ in level $l^{*}$ and below has probability $p_{i} \leq c$ of
having value 1 and thus probability $1-p_{i} \geq 1-c$ of having value
0 and ending the TRR. Since each bit queried is an independent
trial, the expected number of bits queried before
termination is at most $(1-c)^{-1}$. Thus,
$D_{1} \leq (1-c)^{-1}J_{1}$. Together with the fact that
$D_{0} \leq B_{0}$, we get $
  \frac{D_{0}+D_{1}}{B_{0}+B_{1}} \leq \frac{B_{0}+(1-c)^{-1}J_{1}}{B_{0}+J_{1}}$.
Finally, we observe that since $\frac{B_{0}}{B_{0}} = 1$ and
$\frac{(1-c)^{-1}J_{1}}{J_{1}} \leq \frac{1}{1-c}$, it follows from our
earlier upper bound on $\theta_3$, namely
$\theta_3 \leq \frac{D_{0}+D_{1}}{B_{0}+B_{1}}$, that
  $\theta_3 \leq \frac{D_{0}+D_{1}}{B_{0}+B_{1}} \leq \frac{1}{1-c}$.

Thus, we have three upper bounds:
  (1) $\theta_1 \leq \frac{3-c}{2-c}$, (2)                              $\theta_2 \leq 2-c$, and (3)
$\theta_3 \leq \frac{1}{1-c}$.
  This gives us an upper bound on the ratio of the expected cost of the
TRR to the tree $T$, and thus an upper bound on the approximation
factor. This bound is simply the
maximum of the three upper bounds:
 $ \frac{1 + E_{1} + (1-P_{1})E_{2}}{1 + E'_{1} + (1-P'_{1})E'_{2}} \leq \max\left\{\frac{3-c}{2-c},2-c,\frac{1}{1-c}\right\}
 $.
Setting
$c = \frac{3 - \sqrt{5}}{2} \approx 0.381966$ causes all three upper bounds to equal $\varphi$.
Thus, running the TRR algorithm with $c=\frac{3-\sqrt{5}}{2}$ produces
an expected cost of no more than $\varphi$ times the expected cost of
an optimal strategy.
\end{proof}






\section{Acknowledgements}
The authors were partially supported by NSF Award IIS-1217968. D. Kletenik was also partially supported by a PSC-CUNY Award, jointly funded by The Professional Staff Congress and The City University of New York.
We thank an anonymous referee for suggesting we present our results in terms of SSClass.  We thank Zach Pomerantz for experiments that gave us useful insights into goal functions. 

\bibliography{soda5,throughput,miscrefs}

\begin{thebibliography}{10}

\bibitem{Acharyaetal11}
Jayadev Acharya, Ashkan Jafarpour, and Alon Orlitsky.
\newblock Expected query complexity of symmetric {B}oolean functions.
\newblock In {\em {IEEE} 49th Annual Allerton Conference on Communication,
  Control, and Computing}, pages 26--29, 2011.

\bibitem{allen2017evaluation}
Sarah~R Allen, Lisa Hellerstein, Devorah Kletenik, and Tongu{\c{c}}
  {\"U}nl{\"u}yurt.
\newblock Evaluation of monotone dnf formulas.
\newblock {\em Algorithmica}, 77(3):661--685, 2017.

\bibitem{bachetalgoalvalue}
Eric Bach, Jeremie Dusart, Lisa Hellerstein, and Devorah Kletenik.
\newblock Submodular goal value of boolean functions.
\newblock {\em arXiv preprint arXiv:1702.04067}, 2017.

\bibitem{BenDov81}
Y.~Ben-Dov.
\newblock Optimal testing procedure for special structures of coherent systems.
\newblock {\em Management Science}, 1981.

\bibitem{unluyurtBorosDoubleRegular}
E.~Boros and T.~\"{U}nl\"{u}yurt.
\newblock Diagnosing double regular systems.
\newblock {\em Annals of Mathematics and Artificial Intelligence},
  26(1-4):171--191, September 1999.
\newblock URL: \url{http://dx.doi.org/10.1023/A:1018958928835}, \href
  {http://dx.doi.org/10.1023/A:1018958928835}
  {\path{doi:10.1023/A:1018958928835}}.

\bibitem{Chang}
M.-F. Chang, W.~Shi, and W.~K. Fuchs.
\newblock Optimal diagnosis procedures for $k$-out-of-$n$ structures.
\newblock {\em IEEE Transactions on Computers}, 39(4):559--564, April 1990.

\bibitem{Dasetal12}
Hirakendu Das, Ashkan Jafarpour, Alon Orlitsky, Shengjun Pan, and
  Ananda~Theertha Suresh.
\newblock On the query computation and verification of functions.
\newblock In {\em {IEEE} International Symposium on Information Theory
  ({ISIT})}, pages 2711--2715, 2012.

\bibitem{DeshpandeGoalValue}
Amol Deshpande, Lisa Hellerstein, and Devorah Kletenik.
\newblock Approximation algorithms for stochastic submodular set cover with
  applications to boolean function evaluation and min-knapsack.
\newblock {\em ACM Trans. Algorithms}, 12(3):42:1--42:28, April 2016.
\newblock URL: \url{http://doi.acm.org/10.1145/2876506}, \href
  {http://dx.doi.org/10.1145/2876506} {\path{doi:10.1145/2876506}}.

\bibitem{golovinKrause}
D.~Golovin and A.~Krause.
\newblock Adaptive submodularity: Theory and applications in active learning
  and stochastic optimization.
\newblock {\em Journal of Artificial Intelligence Research}, 42:427--486, 2011.

\bibitem{golovinKrauseArxivv5}
D.~Golovin and A.~Krause.
\newblock Adaptive submodularity: {A} new approach to active learning and
  stochastic optimization (version 5).
\newblock {\em CoRR}, abs/1003.3967, 2017.
\newblock URL: \url{http://arxiv.org/abs/1003.3967}, \href
  {http://arxiv.org/abs/1003.3967} {\path{arXiv:1003.3967}}.

\bibitem{grammel2016scenario}
Nathaniel Grammel, Lisa Hellerstein, Devorah Kletenik, and Patrick Lin.
\newblock Scenario submodular cover.
\newblock In {\em Proceedings of the 14th International Workshop on
  Approximation and Online Algorithms}, pages 116--128. Springer, 2016.

\bibitem{Greiner06}
R.~Greiner, R.~Hayward, Magdalena Jankowska, and Michael Molloy.
\newblock Finding optimal satisficing strategies for and-or trees.
\newblock {\em Artificial Intelligence}, 170(1):19--58, 2006.

\bibitem{jung2017simple}
Jongbin Jung, Connor Concannon, Ravi Shroff, Sharad Goel, and Daniel~G
  Goldstein.
\newblock Simple rules for complex decisions.
\newblock {\em arXiv preprint arXiv:1702.04690}, 2017.

\bibitem{kambadur2017adaptive}
Prabhanjan Kambadur, Viswanath Nagarajan, and Fatemeh Navidi.
\newblock Adaptive submodular ranking.
\newblock In {\em International Conference on Integer Programming and
  Combinatorial Optimization}, pages 317--329. Springer, 2017.

\bibitem{kowshikkumar13}
Hemant Kowshik and PR~Kumar.
\newblock Optimal computation of symmetric boolean functions in collocated
  networks.
\newblock {\em IEEE Journal on Selected Areas in Communications},
  31(4):639--654, 2013.

\bibitem{nanSaligrama}
F.~Nan and V.~Saligrama.
\newblock Comments on the proof of adaptive stochastic set cover based on
  adaptive submodularity and its implications for the group identification
  problem in "group-based active query selection for rapid diagnosis in
  time-critical situations".
\newblock {\em {IEEE} Trans. Information Theory}, 63(11):7612--7614, 2017.
\newblock URL: \url{https://doi.org/10.1109/TIT.2017.2749505}, \href
  {http://dx.doi.org/10.1109/TIT.2017.2749505}
  {\path{doi:10.1109/TIT.2017.2749505}}.

\bibitem{Salloum79}
S.~Salloum.
\newblock {\em Optimal testing algorithms for symmetric coherent systems}.
\newblock PhD thesis, University of Southern California, 1979.

\bibitem{SalloumBreuer84}
S.~Salloum and M.~Breuer.
\newblock An optimum testing algorithm for some symmetric coherent systems.
\newblock {\em Journal of Mathematical Analysis and Applications}, 101(1):170
  -- 194, 1984.
\newblock URL:
  \url{http://www.sciencedirect.com/science/article/pii/0022247X84900647},
  \href {http://dx.doi.org/10.1016/0022-247X(84)90064-7}
  {\path{doi:10.1016/0022-247X(84)90064-7}}.

\bibitem{Salloum97}
S.~Salloum and M.~A. Breuer.
\newblock Fast optimal diagnosis procedures for k-out-of-n:g systems.
\newblock {\em IEEE Transactions on Reliability}, 46(2):283--290, Jun 1997.
\newblock \href {http://dx.doi.org/10.1109/24.589958}
  {\path{doi:10.1109/24.589958}}.

\bibitem{tran2016preterm}
Truyen Tran, Wei Luo, Dinh Phung, Jonathan Morris, Kristen Rickard, and Svetha
  Venkatesh.
\newblock Preterm birth prediction: Deriving stable and interpretable rules
  from high dimensional data.
\newblock In {\em Conference on Machine Learning in Healthcare, LA, USA}, 2016.

\bibitem{unluyurtReview}
Tongu\c{c} {\"U}nl{\"u}yurt.
\newblock Sequential testing of complex systems: a review.
\newblock {\em Discrete Applied Mathematics}, 142(1-3):189--205, 2004.

\bibitem{ustun2016supersparse}
Berk Ustun and Cynthia Rudin.
\newblock Supersparse linear integer models for optimized medical scoring
  systems.
\newblock {\em Machine Learning}, 102(3):349--391, 2016.

\bibitem{ustun2017optimized}
Berk Ustun and Cynthia Rudin.
\newblock Optimized risk scores.
\newblock In {\em Proceedings of the 23rd ACM SIGKDD International Conference
  on Knowledge Discovery and Data Mining}, pages 1125--1134. ACM, 2017.

\bibitem{zeng2017interpretable}
Jiaming Zeng, Berk Ustun, and Cynthia Rudin.
\newblock Interpretable classification models for recidivism prediction.
\newblock {\em Journal of the Royal Statistical Society: Series A (Statistics
  in Society)}, 180(3):689--722, 2017.

\end{thebibliography}

\appendix


\newcolumntype{Y}{>{\centering\arraybackslash}X}





\section{Verification vs. Evaluation}
\label{append:verification}

 Let $f$ be a symmetric Boolean function.
Let $g$ be the corresponding block identification function.

We use the following terminology, based on ~\cite{Dasetal12}. 

\begin{tabularx}{\linewidth}{l X}
$\mathscr{V}^c(f)$ &  optimal expected verification cost of $f$ with respect to cost vector $c$\\

$\mathscr{C}^c(f)$  &  optimal expected evaluation cost of $f$ with respect to cost vector $c$ \\

$\mathscr{V}^c(g)$ &  optimal expected verification cost of $g$ with respect to cost vector $c$\\

$\mathscr{C}^c(g)$ & optimal expected evaluation cost of $g$ with respect to cost vector $c$\\
\end{tabularx}

It is obvious that ${V}^c(g) \leq \mathscr{V}^c(f) \leq \mathscr{C}^c(f) \leq \mathscr{C}^c(g)$.

Das et al.~\cite{Dasetal12} proved that for symmetric Boolean functions under unit costs, $\mathscr{V}^c(g) = \mathscr{V}^c(f) = \mathscr{C}^c(f) = \mathscr{C}^c(g)$. We show that that does not hold under arbitrary costs. Namely, we show that there exist symmetric Boolean functions for which cost of evaluation exceeds the cost of verification. 

\begin{theorem}
\label{thm:verif}
There exists a symmetric Boolean function $f$ and cost vector $c$ such that $\mathscr{V}^c(g) < \mathscr{C}^c(f)$.
\end{theorem}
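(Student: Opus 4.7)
The plan is to exhibit a minimal separating example: take $n = 2$, let $f(x_1, x_2) = x_1 \wedge x_2$ (AND, which is symmetric), and let the cost vector be $c = (1, M)$ for some $M > 1$, with expected cost taken under the uniform distribution on $\{0, 1\}^2$. The weight-value sequence $(0, 0, 1)$ of $f$ has exactly two maximal blocks, namely $\{0, 1\}$ and $\{2\}$, so the block identification function $g$ carries the same bit as $f$; in particular $\mathscr{V}^c(g) = \mathscr{V}^c(f)$. The theorem thus reduces to proving the strict inequality $\mathscr{V}^c(f) < \mathscr{C}^c(f)$ on this instance.

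To compute $\mathscr{V}^c(f)$, I would identify the cheapest certificate at each of the four inputs. The essential asymmetry of the cost vector is that the cheap variable $x_1$ is itself a single-variable certificate of $f = 0$ on the \emph{two} inputs $(0,0)$ and $(0,1)$, whereas on $(1,0)$ only the expensive $x_2$ can serve as the witness, and on $(1,1)$ both variables are forced. Averaging the four certificate costs yields $\mathscr{V}^c(f) = 3/4 + M/2$.

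To compute $\mathscr{C}^c(f)$, I would observe that any decision tree on two variables has only two candidate roots. Querying the cheap variable $x_1$ first is optimal: the branch $x_1 = 0$ terminates immediately, while the branch $x_1 = 1$ queries $x_2$, giving expected cost $1/2 + (1 + M)/2 = 1 + M/2$; querying $x_2$ first is strictly worse for $M > 1$. Hence $\mathscr{C}^c(f) = 1 + M/2$.

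Combining, $\mathscr{V}^c(g) = 3/4 + M/2 < 1 + M/2 = \mathscr{C}^c(f)$, with a constant gap of $1/4$ independent of $M$. The only substantive step is the root case analysis for the decision tree, which is immediate for two variables; the conceptual point is that verification can adaptively select the cheapest certificate per input (reusing $\{x_1\}$ on both $(0,0)$ and $(0,1)$), while a single decision tree must commit to an ordering upfront and necessarily pays the expensive query on the input $(1, 0)$.
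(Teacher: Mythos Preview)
Your computation of $\mathscr{V}^c(f)$ is not the verification cost; it is the expected \emph{minimum certificate cost}, i.e.\ the average over inputs of the cheapest proof for that particular input. Verification in this setting is not that generous: the verifier is told only the \emph{value} $f(x)$ (equivalently, the block), not the input $x$ itself, and must then run an adaptive testing strategy until a certificate is found. In your example, when the verifier is told $f(x)=0$ it does not know whether the input is $(0,0)$, $(0,1)$, or $(1,0)$; it cannot decide to ``query $x_2$ first only on $(1,0)$''. The $0$-optimal strategy queries $x_1$ first and, conditioned on $f=0$, has expected cost $\tfrac{2}{3}\cdot 1 + \tfrac{1}{3}\cdot(1+M)=1+M/3$. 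Combining with the forced cost $1+M$ when $f=1$ gives
\[
\mathscr{V}^c(f)=\tfrac{3}{4}\bigl(1+\tfrac{M}{3}\bigr)+\tfrac{1}{4}(1+M)=1+\tfrac{M}{2}=\mathscr{C}^c(f),
\]
so there is no gap.

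This is not an accident of the arithmetic. The AND function is a $k$-of-$n$ function (here $k=n=2$), and for \emph{every} $k$-of-$n$ function the classical result reviewed in Appendix~\ref{append:kofn} shows that the standard evaluation strategy is simultaneously $0$-optimal and $1$-optimal, hence $\mathscr{V}^c(f)=\mathscr{C}^c(f)$ for all cost and probability vectors. Any two-block symmetric Boolean function is a $k$-of-$n$ function, so no two-variable example can separate verification from evaluation. This is exactly why the paper's counterexample needs at least three blocks (value vector $01100$ on $n=4$ variables) and carefully tuned costs and probabilities: one must force the optimal verification trees for two distinct blocks to be genuinely incompatible, something that cannot happen when there are only two blocks.
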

\begin{proof}
We give a symmetric function $f$ on $n = 4$ bits that is defined by value vector $v^f = 01100$.  That is, for all $x \in \{0,1\}^n$ with $N_1(x)=j$, then $f(x) = v^f_j$. The blocks of this vector are $B_1=0$, $B_2 = 11$, and $B_3 = 00$. The costs and probabilities for the variables are given in Table~\ref{tab:vars}. 

\renewcommand{\arraystretch}{1}
\begin{table}[h!]
\centering
\begin{tabular}{l l l}
\hline
variable & $p_i$ & cost \\
$x_0$ & 0.1 & 5000 \\
$x_1$ &  0.3 & 6000 \\
$x_2$ & 0.9 & 3000 \\
$x_3$ & 0.8 & 5000 \\
\hline
\end{tabular}
\caption{Table of variables}
\label{tab:vars}
\end{table}

The optimal evaluation tree for $f$ is given in Figure~\ref{fig:opteval}; we denote it as $T$. (Following convention, left edges are implicitly labeled with 0s and right edges with 1s.) It has an expected evaluation cost $C(f) = 14,618$. Note that for any given  root and its left child, the structure of the optimal evaluation tree for $f$  can be determined through a series of $k$-of-$n$ evaluations. Hence, the optimal evaluation tree for $f$ can be found by trying all root-left child combinations and choosing the optimal. Those combinations and the expected tree cost are given in Table~\ref{tab:comb}; the optimal tree cost is bolded.

\begin{figure}[h!]
\centering
\includegraphics[scale=.75]{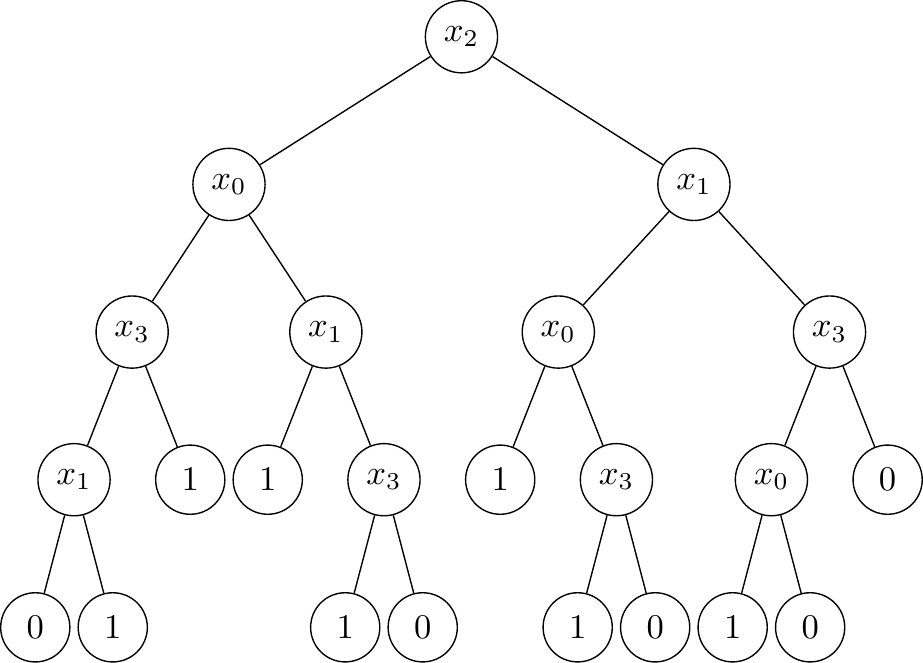}
\caption{Optimal evaluation tree for $f$}
\label{fig:opteval}
\end{figure}

\renewcommand{\arraystretch}{1}

\begin{table}[h!]
\centering
\begin{tabular}{ccc}
\hline
root & left child & expected cost of tree \\

$x_0$ & $x_1$ & 15,529\\
$x_0$ & $x_2$ & 15,259 \\
$x_0$ & $x_3$ & 16,042 \\
$x_1$ & $x_0$ & 14,881 \\
$x_1$ & $x_2$ & 14,643 \\
$x_1$  & $x_3$& 15,616\\
$\mathbf{x_2}$ & $\mathbf{x_0}$ & \textbf{14,618} \\
$x_2$ & $x_1$ & 14,670\\
$x_2$ & $x_3$ & 14,623\\
$x_3$ & $x_0$ &  15,394\\
$x_3$ & $x_1$ & 15,616\\
$x_3$ & $x_2$ & 15,406\\
\hline
\end{tabular}
\caption{Possible trees for $f$ and their cost}
\label{tab:comb}
\end{table}

The expected cost of verifying that an assignment is in $B_2$ using $T$ is $10,248.8$.

But the optimal verification cost for $B_2$ is actually $10,241.8$. That cost is achieved in the tree in Figure~\ref{fig:optverif}. (The leaf nodes labeled X are nodes that the verification tree can never reach; they correspond to assignments  not in $B_2$.) Hence, $\mathscr{C}^c(g) \neq \mathscr{V}^c(g)$.

\begin{figure}[h!]
\centering
\includegraphics[scale=.75]{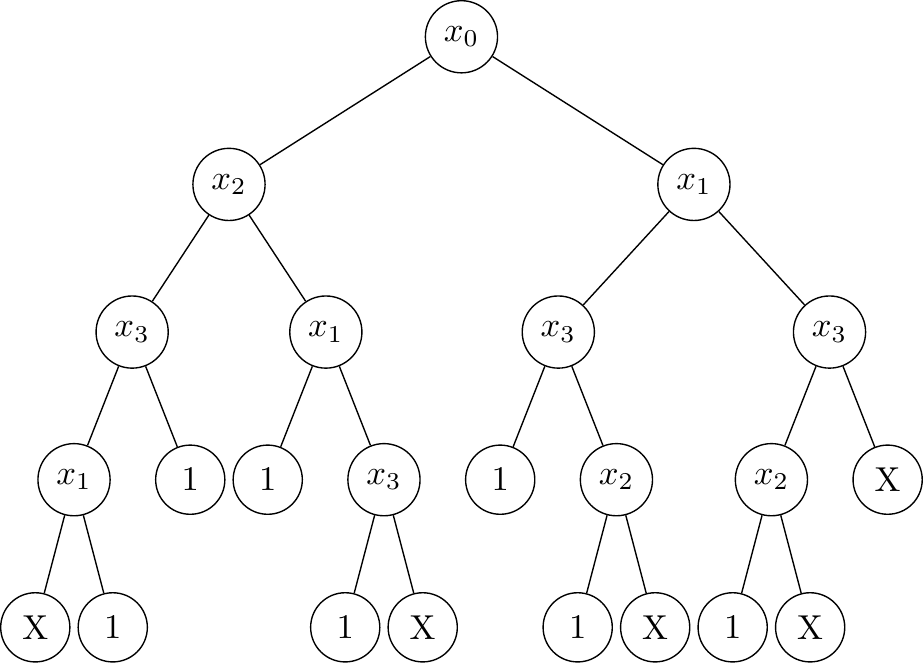}
\caption{Optimal verification tree for $B_2$}
\label{fig:optverif}
\end{figure}

The construction of this counterexample was based on the following observations.
The optimal verification tree for $B_1$ is obvious since it must test all four variables on
assignments in $B_1$ (in any order). The optimal verification tree for $B_3$ is obvious as well; since it must verify the block by finding at least three 1's, it tests the variables in increasing order of $\frac{c_i}{p_i}$ and terminates as soon as three 1's are found. However, since at least three variables must be tested, any tree that tests the three cheapest $\frac{c_i}{p_i}$ variables first, in any order, has the same (optimal) cost. We call the set of all trees that test those variables first $S\{T_{B_3}\}$; it is the set of all optimal verification trees for $B_3$,

The optimal verification tree for $B_2$ is less obvious; however, given variables for the root and its left child, the rest of the tree follows from a series of $k$-of-$n$ evaluations, just like $T$. We give the structure for the tree in Figure~\ref{fig:optstruct} and denote it is as $T_{B_2}$. 

\begin{figure}
\centering
\includegraphics[scale=.75]{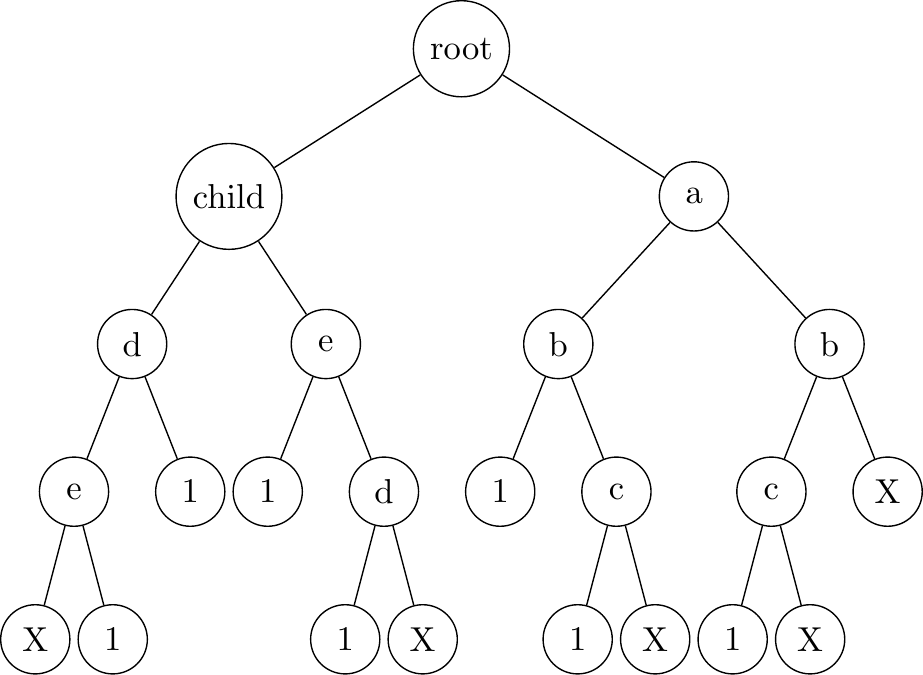}
\caption{$T_{B_2}$: Optimal verification tree structure for $B_2$}
\label{fig:optstruct}
\end{figure}

Specifically, the rules for the nodes of $T_{B_2}$ are as follows:

Nodes a, b, and c are the remaining variables on the right-hand side after the root is chosen, ordered in increasing order of $\frac{c_i}{1-p_i}$. This is due to the fact that once the root node is tested and has the value of 1, the goal is to find 0's as cheaply as possible. 

Node d is chosen to be the variable with the maximum $\frac{c_i}{p_i}$; since two 0's have already been found, the goal is to find cheap 1's. 

Finally, node e is again chosen to be the variable with low $\frac{c_i}{1-pi}$, reflecting once again that once one 1 has been found, the goal is to find cheap 0's.

If the root of the optimal verification tree for $B_2$ has the maximum value for $\frac{c_i}{p_i}$, and furthermore, the variable tested in node \emph{child} differs from the variables tested in a and b, $T_{B_2}$ will differ from all of the trees in $S\{T_{B_3}\}$.  This is in fact the case for $f$ under the cost and probabilities given in Table~\ref{tab:vars}. Hence the optimal evaluation tree for function $f$, $T$, must achieve a non-optimal verification cost on either block $B_2$ or $B_3$.

(We note that the particular variables given here are far from the only choice of variables that satisfy these conditions and prove the theorem. They were chosen as an illustrative example.)
\end{proof}

A corollary follows:
\begin{corollary}
  \label{cor:verifvseval}
For all interval functions $f$ and cost vectors $c$, $\mathscr{V}^c(g) = \mathscr{V}^c(f)$.
\end{corollary}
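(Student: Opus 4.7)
The plan is to prove the non-obvious direction $\mathscr{V}^c(f)\leq \mathscr{V}^c(g)$, since the opening of the section supplies the reverse. For an interval $f$ with value vector $0^a 1^b 0^c$ and blocks $B_1,B_2,B_3$ (so $B_2=f^{-1}(1)$), the $f$-value-$1$ verification coincides with $B_2$-verification, contributing an equal term to both quantities. So the task reduces to exhibiting a single value-$0$ verification tree $T^\star$ whose expected cost matches $\Pr[B_1]\,V^c_{B_1}+\Pr[B_3]\,V^c_{B_3}$, where $V^c_{B_i}$ denotes the optimal expected verification cost for block $B_i$.

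The pivotal structural observation is that on an interval function every certificate for $f(x)=0$ is automatically a certificate for a single block. The set of $N_1$-values attained by completions of a partial assignment with $z$ zeros and $o$ ones is the contiguous integer interval $[o,n-z]$; if this interval avoids $B_2=\{a,\ldots,a+b-1\}$ then it lies wholly in $\{0,\ldots,a-1\}$ (forcing $B_1$) or wholly in $\{a+b,\ldots,n\}$ (forcing $B_3$). Consequently every leaf of any value-$0$ verification tree is unambiguously a ``$B_1$-leaf'' or a ``$B_3$-leaf'', and a $B_i$-input reaches only $B_i$-leaves.

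Given this, I would construct $T^\star$ adaptively in terms of the current query state $(z,o)$. In states where only $B_1$ is still consistent with the history ($z\geq n-a-b+1$, so the remaining variables cannot accumulate enough ones to reach $a+b$), follow the optimal $B_1$-verification continuation from that state; symmetrically for $B_3$-only states ($o\geq a$). In the ``both-consistent'' region, the plan is to choose each query so that the induced $B_1$- and $B_3$-sub-strategies simultaneously attain their per-block optima. The leverage is that progress toward $B_1$ (accumulating zeros) is routed into the $0$-branch of any query while progress toward $B_3$ (accumulating ones) is routed into the $1$-branch, and the optimal remaining cost of verifying either threshold block from a state depends only on the residual variables and the remaining count needed, not on the specific detour queries taken to reach it.

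The main obstacle is establishing this simultaneous-optimality claim precisely: that at each reachable both-consistent state there is a single variable whose choice realizes both per-block optima at once. I would handle it by induction on $n-z-o$, decomposing the combined objective at state $(z,o)$ into a $B_1$-contribution flowing through the $0$-branches and a $B_3$-contribution flowing through the $1$-branches, and arguing that the two contributions are minimized by compatible variable orderings for an interval $f$, so that the sum of the per-block minima is attained by a single tree.
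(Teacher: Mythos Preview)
Your plan is the same outline the paper uses: the value-$1$ case is immediate, and for value $0$ you aim to build a single tree $T^\star$ whose expected cost on $B_1$-inputs matches the $B_1$-optimum and on $B_3$-inputs matches the $B_3$-optimum. Your contiguity observation---that any value-$0$ certificate already certifies a specific block---is correct and sets this up cleanly.

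The gap is exactly where you flag it, and your proposed resolution does not close it. Saying the two contributions are ``minimized by compatible variable orderings'' does not explain what makes them compatible; the relevant orderings (increasing $c_i/(1-p_i)$ for $B_1$, increasing $c_i/p_i$ for $B_3$) are in general quite different, even opposite. The paper fills this in with the $k$-of-$n$ intersection trick. The $B_1$-optimal strategy (certify $N_1(x)<a$, i.e., find $n-a+1$ zeros) cannot terminate on any $B_1$-input before it has queried all of the first $n-a+1$ bits in the $c_i/(1-p_i)$ order, so those bits may be permuted to the front without changing its expected cost on $B_1$-inputs; symmetrically, the $B_3$-optimal strategy must query all of the first $a+b$ bits in the $c_i/p_i$ order. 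Since $(n-a+1)+(a+b)=n+b+1>n$, these two ``must-query'' sets intersect, and any bit in the intersection can be placed at the root while preserving optimality for both blocks simultaneously. Iterating this at every node yields $T^\star$. This pigeonhole on the initial segments of the two orderings---not any compatibility of the orderings themselves---is the concrete idea your last paragraph is missing.
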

\begin{proof}
For the particular function $f$ given above, defined by value vector 01100, verifying the value of the function when it is 1 is equivalent to verifying block $B_2.$  Verifying the value of the function when it is 0 requires verifying either block $B_1$ or $B_3$; however, since the optimal verification strategy for $B_1$ is to test every bit (in any order), the optimal verification tree for $B_3$ is the optimal verification tree for $f=0$. Hence, $\mathscr{V}^c(f) = \mathscr{V}^c(g)$ for any cost vector $c$.

More generally, for any three-blocked value vector, the verification tree for the value of the function will either be the verification tree for the middle block, or a verification tree for blocks 1 and 3. Whenever it is the latter, there will always exist at least one bit in the intersection of the optimal verification strategies for blocks 1 and 3. Then we can use a strategy similar to the one in Section~\ref{subsec:kofn} to continuously choose the bit in the intersection of the strategies to form the optimal verification tree.
In doing so, we replace the verification trees for the first and last blocks with a tree of equal expected cost. 

Hence, for any three-block value vector, $\mathscr{V}^c(g) = \mathscr{V}^c(f)$.
\end{proof}

\section{Background: Optimality of the k-of-n Algorithm}
\label{append:kofn}
In Section~\ref{subsec:kofn}, we described the known algorithm for evaluating $k$-of-$n$ functions~\cite{Salloum79,BenDov81,SalloumBreuer84,Chang,Salloum97}.  
It is helpful to understand why this algorithm is, in fact, an optimal adaptive evaluation strategy.  Here we review a version of the proof that is given in~\cite{unluyurtBorosDoubleRegular}.

The proof relies on the fact that evaluating the bits in increasing $c_i/p_i$ order is a 1-optimal strategy, and evaluating them in increasing $c_i/(1-p_i)$ ordering is a 0-optimal strategy.  (We omit the proof of this fact here.)  Thus these two strategies constitute an optimal verification strategy.

The expected cost of this optimal verification strategy is a lower bound on the expected cost of an optimal evaluation strategy.  If $f(x)=1$, the 1-optimal strategy cannot terminate on $x$ before it has tested all $k$ bits in $S_1$.  Thus the strategy is still 1-optimal if those bits are permuted.  Similarly, if $f(x)=0$, the 0-optimal strategy cannot terminate before it has tested all bits in $S_0$, and those can be permuted.  If $x_i \in S_1 \cap S_0$, it there is both a 1-optimal strategy and a 0-optimal strategy that tests $x_i$ first.
Inductively, it follows that the above $k$-of-$n$ evaluation strategy is both 1-optimal and 0-optimal.  Since its expected cost is equal to the optimal expected verification cost, it is an optimal evaluation strategy.  

\section{Omitted Proofs and Related Material}
\label{append:proofs}

\subsection{Details of the $B-1$ approximation}
\label{sec:B-1approx}

%
Let $f:\{0,1\}^n \rightarrow \{1, \ldots, B\}$ be 
the unweighted score classification function associated with the values
$0 = \alpha_1 < \ldots < \alpha_B < \alpha_{B+1}=n+1$.
Let $v=v^f$ be its value vector. 
An assignment $x$ belongs to block $j$ if $\alpha_j \leq N_1(x) < \alpha_{j+1}$.

We present Algorithm~\ref{alg:repeatedkofn} and
show it achieves a $(B-1)$-approximation for the Symmetric SLSC problem.
In the algorithm, we denote as $f_{i}$ the $k$-of-$n$
function with $k=\alpha_i$. We note that in different iterations of the for loop, the strategy that is executed in the body may choose a test that was already performed in a previous iteration. 
The test does not actually have to be repeated, as the outcome can be stored after the first time the test is performed, and accessed whenever the test is chosen again.  


\begin{algorithm}[ht]
  \caption{Adaptive Algorithm for Evaluating Score Classification Functio $f$}
  \label{alg:repeatedkofn}
  \begin{algorithmic}
    \FOR{$i\gets 2$ {\bfseries to} $B$}
    \STATE Run the optimal adaptive $k$-of-$n$ strategy to evaluate $f_{i}(x)$
    \ENDFOR
    \STATE Let $i^{*}\gets \max\{i\mid f_{i}(x)=1\}$~~~//~$i^{*}=\alpha_1 = 0$ if $f_i(x)=0$ \\~~~~~~~~~~~~~~~~~~~~~~~~~~~~~~~~~~~~~~~~~~~~~~~~~~~~~~~~~for all $i > 1$
     \RETURN $v_{\alpha_{i^{*}}}$
  \end{algorithmic}
\end{algorithm}

The  correctness of the algorithm follows easily
from the fact that
$\alpha_{i^{*}}\leq N_{1}(x) < \alpha_{i^{*}+1}$, and so
$f(x) = v^{f}_{\alpha_{i^{*}}}$.

We now examine the expected cost of the strategy computed in
Algorithm~\ref{alg:repeatedkofn}. Let $C(f_{i})$ denote the expected cost of
evaluating $f_{i}$ using the optimal $k$-of-$n$ strategy. Let $\OPT$ be expected cost of the optimal adaptive strategy for $f$.

\begin{lemma}
  \label{lem:kofnltopt}
  $C(f_{i})\leq \OPT$ for $i \in \{1, \ldots, B-1\}$.
\end{lemma}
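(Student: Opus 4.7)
The plan is to show that the optimal adaptive strategy for $f$ already constitutes a valid (though not necessarily optimal) evaluation strategy for each $f_i$, which immediately yields $C(f_i)\leq \OPT$.

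The first step is to observe that $f_i(x)$ is determined by $f(x)$. Since $f(x)=j$ iff $\alpha_j \leq N_1(x) < \alpha_{j+1}$ and $f_i(x) = 1$ iff $N_1(x) \geq \alpha_i$, we have $f_i(x) = 1$ precisely when $f(x) \geq i$. Thus, knowing the block $f(x)$ suffices to determine $f_i(x)$ for every $i$.

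Next, let $T$ be an optimal decision tree for $f$, with expected cost $\OPT$. Each leaf of $T$ is labeled by some $j \in \{1, \ldots, B\}$, the common value of $f$ on all inputs consistent with the root-to-leaf tests. Relabel each leaf by $1$ if its label $j$ satisfies $j \geq i$ and by $0$ otherwise, to obtain a tree $T_i$. By the observation above, $T_i$ correctly evaluates $f_i$. Since relabeling leaves changes neither the internal-node tests nor the probability of reaching any leaf, the expected cost of $T_i$ equals $\OPT$. Because $C(f_i)$ is the optimal expected cost for $f_i$, we conclude $C(f_i) \leq \OPT$.

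I do not foresee any real obstacle: the argument reduces to the observation that $f_i$ is a coarsening of $f$, so any strategy that computes $f$ automatically collects enough information to compute $f_i$. The restriction $i \leq B-1$ in the lemma statement is not needed for this inequality itself; the proof goes through verbatim for every $i \in \{1, \ldots, B\}$.
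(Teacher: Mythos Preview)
Your proof is correct and follows essentially the same approach as the paper: take an optimal decision tree for $f$, relabel each leaf according to whether its block index is at least $i$, and observe that the resulting tree evaluates $f_i$ with expected cost $\OPT$. The paper spells out slightly more explicitly why each leaf of $T$ corresponds to a single block (via the induced value vector), but your observation that $f_i(x)=1$ iff $f(x)\geq i$ captures the same point.
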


\begin{proof}
Let $T$ be an optimal adaptive strategy for evaluating $f$.
Consider using $T$ to evaluate $f$ on an initially unknown input $x$.
When a leaf of $T$ is reached, we have discovered the values of some of the
bits of $x$.  Let $d$ be the partial assignment representing that knowledge.
Recall that $f^{d}$ is the function induced from $f$ by $d$.
The value vector of $f^{d}$ is a subvector of $v^f$, the value vector of $f$.
More particularly, it is the subvector stretching from index $N_1(d)$ of $v^f$
to index $n - N_0(d)$.  
Since $T$ is an evaluation strategy for $f$,
reaching a leaf of $T$ means that we have enough information to determine $f(x)$.
Thus all entries of the subvector must be equal, implying that it is contained within a single
block of $v^f$.  We call this the block associated with the leaf.  

For each block $i$, we can create
a new tree $T'_{i}$ from $T$ which evaluates the
  function $f_{i}$.  We do this by relabeling the leaves of $T$: if the leaf is associated with block $i'$, then we label the leaf with output value 1 if $i' > i$, and with 0 otherwise.   $T'_{i}$ is an adaptive strategy for evaluating $f_{i}$.

  The expected cost of evaluating $f_{i}$ 
  using $T'_{i}$ is equal
  to $\OPT$, since the structure of the tree is unchanged from $T$
  (we've only changed the labels). Since $T'_{i}$ cannot do better
  than the optimal $k$-of-$n$ strategy, $C(f_{i}) \leq \OPT$.
\end{proof}

This yields an approximation bound for Algorithm~\ref{alg:repeatedkofn}.

\begin{theorem} Algorithm~\ref{alg:repeatedkofn} is a $(B-1)$-approximation algorithm for the unweighted adaptive SSClass problem.
\end{theorem}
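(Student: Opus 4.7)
The plan is straightforward: decompose the expected cost of Algorithm~\ref{alg:repeatedkofn} into the expected costs of its $B-1$ constituent $k$-of-$n$ evaluations, bound each by $\OPT$ using Lemma~\ref{lem:kofnltopt}, and sum.

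In more detail, the for loop executes $B-1$ iterations ($i$ from $2$ to $B$), and each iteration invokes the optimal adaptive $k$-of-$n$ strategy for $f_i$ on the input $x$. The total cost the algorithm pays on any fixed $x$ is therefore at most $\sum_{i=2}^{B}$ (cost of the $k$-of-$n$ strategy for $f_i$ on $x$); this is an inequality rather than equality because the algorithm caches tests across iterations (already-queried bits are free in later iterations), but I do not need to exploit that savings. Taking expectation over $x$ and applying linearity gives that the expected cost of the algorithm is at most $\sum_{i=2}^{B} C(f_i)$.

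It remains to bound each $C(f_i)$. Lemma~\ref{lem:kofnltopt} is stated for $i \in \{1,\ldots,B-1\}$, whereas the summation here is over $i \in \{2,\ldots,B\}$, so I would first check that the lemma's argument applies to the latter range as well. It does, verbatim: for each $i \in \{2,\ldots,B\}$, the function $f_i$ is a non-trivial $k$-of-$n$ function with $k=\alpha_i \in \{1,\ldots,n\}$, and relabeling the leaves of an optimal tree $T$ for $f$ by ``$1$ if the leaf's associated block index is at least $i$, else $0$'' produces a valid evaluation tree for $f_i$ whose expected cost equals the expected cost of $T$, namely $\OPT$. Since the optimal $k$-of-$n$ strategy can only be cheaper, $C(f_i)\leq \OPT$ throughout the range of summation.

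Combining the two bounds gives that the algorithm's expected cost is at most $(B-1)\cdot\OPT$, which is exactly the claimed approximation ratio. Because all of the substantive work has already been absorbed into Lemma~\ref{lem:kofnltopt}, I do not anticipate any genuine obstacle; the only point requiring a moment of care is confirming that the lemma's stated range of $i$ matches (or can be trivially extended to) the loop's range of iteration.
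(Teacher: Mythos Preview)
Your proposal is correct and follows essentially the same approach as the paper: bound the algorithm's expected cost by $\sum_i C(f_i)$ and then apply Lemma~\ref{lem:kofnltopt} termwise to get $(B-1)\OPT$. Your attention to the index range is in fact more careful than the paper's own write-up, which sums over $i=1,\ldots,B-1$ while the loop runs over $i=2,\ldots,B$; as you note, the lemma's argument applies verbatim to the latter range.
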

\begin{proof}
  The total cost incurred
  by the algorithm is no greater than the sum of the costs incurred by the $B-1$ runs of the $k$-of-$n$ algorithm. Thus by Lemma~\ref{lem:kofnltopt}, $\ALG \leq \sum_{i=1}^{B-1} C(f_{i}) \leq \sum_{i=1}^{B-1}\OPT$.
\end{proof}


\subsection{The $2(B-1)$ approximation for non-adaptive unweighted SSClass, arbitrary costs}
\label{sec:nonadaptive2B-1}

We briefly mentioned the result in Section~\ref{sec:4approx}. Note that we already have a simple $B-1$ approximation algorithm for the adaptive case.  

\begin{theorem}
  \label{thm:2Bminus1approx}
  Algorithm~\ref{alg:arbRR} is a $2(B-1)$-approximation for the non-adapative unweighted SSCLass problem.
\end{theorem}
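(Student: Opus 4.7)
The plan is to mirror the $(B-1)$-approximation proof for the adaptive case (Lemma~\ref{lem:kofnltopt} together with its corollary), absorbing an additional factor of $2$ coming from the non-adaptive setting. Let $\OPT$ denote the optimal non-adaptive expected cost of evaluating $f$, and let $\OPT_i$ denote the optimal non-adaptive expected cost of evaluating $f_i$. The first step I would carry out is a non-adaptive version of Lemma~\ref{lem:kofnltopt}, namely $\OPT_i \le \OPT$ for every $i\in\{2,\ldots,B\}$. The argument is a direct transcription of the adaptive proof: a non-adaptive strategy for $f$ is a fixed permutation $\pi$ of the tests, probed in order until $f$ is determined; replacing the stopping criterion with ``stop once $f_i$ is determined'' yields a valid non-adaptive strategy for $f_i$ whose stopping time on every input $x$ is no later than that for $f$, since $f(x)$ being determined forces $f_i(x)$ to be determined as well.

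The second ingredient is the standard $2$-approximation for non-adaptive $k$-of-$n$ under arbitrary costs, obtained by round-robin interleaving of the $1$-optimal order (increasing $c_j/p_j$) and the $0$-optimal order (increasing $c_j/(1-p_j)$); these are precisely the two verification orders whose adaptive merge is analyzed in Appendix~\ref{append:kofn}. Writing $\pi_i^{RR}$ for the round-robin permutation associated with threshold $\alpha_i$, its expected cost on $f_i$ is at most $2\,\OPT_i$. I expect Algorithm~\ref{alg:arbRR} to assemble a single non-adaptive strategy from the $\pi_i^{RR}$'s for $i=2,\ldots,B$ (interleaved so that the $f_i$-determining prefix of each $\pi_i^{RR}$ sits within a bounded-length prefix of the combined permutation, with repeated tests never double-charged), stopped as soon as $f$ is determined. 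Chaining the two ingredients then yields $\ALG \le \sum_{i=2}^{B} 2\,\OPT_i \le 2(B-1)\,\OPT$.

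The main obstacle I expect is the combination step: unlike the adaptive case, where each $k$-of-$n$ subroutine can be invoked in sequence conditional on previous outcomes, a non-adaptive strategy must commit to its test order up front. The care required is to argue that the cost of the combined permutation is bounded by the sum of the $\pi_i^{RR}$ costs, which relies on two facts: that $f$ is determined precisely when every $f_i$ is determined (using monotonicity of the $f_i$'s), and that the round-robin-of-round-robins structure ensures no individual $\pi_i^{RR}$ is pursued far past its $f_i$-determining prefix. Once this accounting is set up, the conclusion follows routinely by linearity of expectation.
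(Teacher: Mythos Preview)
Your proposal rests on a misconception about Algorithm~\ref{alg:arbRR}: the $1$-optimal order (increasing $c_j/p_j$) and the $0$-optimal order (increasing $c_j/(1-p_j)$) for a $k$-of-$n$ function do not depend on $k$, so every $\pi_i^{RR}$ you define is the \emph{same} permutation. Algorithm~\ref{alg:arbRR} is precisely that single round-robin between $\mathrm{Alg}_0$ and $\mathrm{Alg}_1$, stopped when $f$ is determined. The ``main obstacle'' you flag---the combination step---therefore evaporates: there is nothing to interleave across thresholds, and the bound $C(\mathrm{RR},f,x)\le\sum_{i=2}^{B}C(\mathrm{RR},f_i,x)$ follows at once from $C(\mathrm{RR},f,x)=\max_i C(\mathrm{RR},f_i,x)$ (since $f$ is determined exactly when every $f_i$ is).

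With that correction your decomposition is sound and yields a legitimate alternative proof: $\ALG\le\sum_i E[C(\mathrm{RR},f_i)]\le\sum_i 2\,\OPT_i\le 2(B-1)\,\OPT$, the middle step being the standard round-robin $2$-approximation for non-adaptive $k$-of-$n$ and the last your non-adaptive version of Lemma~\ref{lem:kofnltopt}. The paper, however, argues differently. Rather than summing over thresholds, it partitions the assignment space into sets $S_i$ according to which block boundary $\alpha_i$ is the \emph{last} one crossed before $\mathrm{RR}$ terminates, and shows $\sum_{a\in S_i}C(\mathrm{RR},a)p(a)\le 2\,\OPT$ for each $i$. That bound is obtained by splitting $S_i$ into $Q_0\cup Q_1$ (by whether the terminating test was chosen by $\mathrm{Alg}_0$ or $\mathrm{Alg}_1$) and comparing each half against the $0$-optimal or $1$-optimal strategy for $f_i$, using that the optimal non-adaptive strategy $\mathcal{A}$ for $f$ is also a strategy for $f_i$ via leaf-relabeling. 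Your approach charges every assignment $B-1$ times (once per threshold, then uses $\max\le\sum$); the paper charges each assignment once, to a single $S_i$. Both routes arrive at $2(B-1)$.
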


\begin{proof}
Let $f:\{0,1\}^n \rightarrow \{1, \ldots, B\}$ be the score classification function associated with an instance of the problem.  Let $\mathcal{A}$ be an optimal non-adaptive algorithm for evaluating $f$ and let $OPT$ be its expected cost.

Consider running Algorithm~\ref{alg:arbRR} to evaluate $f$.
  For each assignment $a\in\fullassign$, there is some block boundary
  $\alpha_{i}$ that is the final block boundary ``crossed'' before
  execution of Algorithm~\ref{alg:arbRR} terminates. In other words, immediately 
  before the final test is chosen, the value vector of the 
  pseudo-Boolean function induced by the prior test results contains
  entries $\alpha_i-1$ and  $\alpha_i$ of the original value vector, where $i$ is the index of a block of that vector.  The final test will cause the induced value vector to contain only one of these entries, thereby determining whether $x$ is in block $i-1$ or block $i$.  
 Either way, we say that
$\alpha_{i}$ was the final block boundary crossed.

  There are $B-1$ possible final block boundaries, $\alpha_2, \ldots, \alpha_{B}$. 
We will partition the assignments $x \in \{0,1\}^n$ into sets $S_{i}$
  for $i\in\{2,3,\dots,B\}$ where each set $S_{i}$ contains all
  assignments on which execution of Algorithm~\ref{alg:arbRR}
  terminates after crossing block boundary $\alpha_{i}$. 
  Let $RR$ denote the strategy of Algorithm~\ref{alg:arbRR}.

Quantity $C(\mathrm{RR}, a)$ is the cost incurred by $RR$ on assignment $a$.
  Let $C^{RR}(\mathrm{Alg}_{0},a)$ and $C^{RR}(\mathrm{Alg}_{1},a)$ represent
  the cost incurred on assignment $a$ during the execution of $RR$ by $\mathrm{Alg}_{0}$ and $\mathrm{Alg}_{1}$
  respectively, so $C(\mathrm{RR},a) =  C^{RR}(\mathrm{Alg}_{0},a) + C^{RR}(\mathrm{Alg}_{1},a)$.
  Let $Q_{0}$ and $Q_{1}$ be the sets of
  assignments $a$ for which the final bit queried in Algorithm~\ref{alg:arbRR} was determined by
  $\mathrm{Alg}_{0}$ and $\mathrm{Alg}_{1}$, respectively.
  
  Let $f_i$ denote the $k$-of-$n$ function with $k = \alpha_i$.
  Let $\mathrm{Alg}^i_{0}$ denote the 0-optimal strategy for evaluating $f_i$, which queries bits in increasing order of $c_i/(1-p_i)$ until $n-\alpha_i+1$ 0's are obtained, or all bits are queried. Similarly, let $\mathrm{Alg}^i_{1}$ denote the 1-optimal strategy for evaluating $f_i$, which queries bits $j$ in increasing order of $c_j/p_j$ until $\alpha_i$ 1's are obtained, or all bits are queried. 
  We have the following two inequalities, one each for $Q_{0}$ and $Q_{1}$.
  \begin{gather}
    \label{eq:arbRRbbsum1}
    \sum_{a\in S_{i}\cap Q_{1}} C(\mathrm{RR},a)p(a) \\
    \leq \sum_{a\in S_{i}\cap Q_{1}}2C^{RR}(\mathrm{Alg}_{1},a)p(a) \\
    \leq \sum_{\substack{a\in\fullassign \\ N_{1}(a)\geq \alpha_i}} 2C(\mathrm{Alg}^i_{1},a)p(a)
    \end{gather}
    \begin{gather}
    \label{eq:arbRRbbsum0}
    \sum_{a\in S_{i}\cap Q_{0}} C(\mathrm{RR},a)p(a) \\
    \leq \sum_{a\in S_{i}\cap Q_{0}}2C^{RR}(\mathrm{Alg}_{0},a)p(a) \\
    \leq \sum_{\substack{a\in\fullassign \\ N_{1}(a)\geq n-\alpha_i+1}} 2C(\mathrm{Alg}^i_{0},a)p(a)
  \end{gather}
  For each, the first inequality holds because
  $C(\mathrm{RR}, a) = C^{RR}(\mathrm{Alg}_{0},a) + C^{RR}(\mathrm{Alg}_{1},a)$. Further,
  it holds that $C^{RR}(\mathrm{Alg}_{0},a)\leq C^{RR}(\mathrm{Alg}_{1},a)$ for
  assignments in $Q_{1}$ (and similarly for assignments in
  $Q_{0}$). 
  
  As in the proof of Lemma~\ref{lem:kofnltopt}, the strategy $\mathcal{A}$ for evaluating $f$ 
  could be turned into a strategy for evaluating $f_i$ by relabeling the leaves of $\mathcal{A}$,
  without changing the cost incurred by the strategy on any assignment.  Since $\mathrm{Alg}^i_{0}$ is a 0-optimal strategy for $f^i$, $\mathrm{Alg}^i_{1}$ is a 1-optimal strategy for $f^i$, $f^i(a) = 1$ iff $N_1(a) \geq b_i$, and $f^i(a) = 0$ iff $N_1(a) < \alpha_i$ (equivalently, $N_0(a) \geq n-\alpha_i + 1$),

  \[
  	\sum_{\substack{a\in\fullassign \\ N_{1}(a)\geq \alpha_i}} C^{RR}(\mathrm{Alg}_{1},a)p(a) 
    \leq \sum_{\substack{a\in\fullassign \\ N_{1}(a)\geq \alpha_i}} C(\mathcal{A},a)p(a)
  \]
  and
  \[
    \sum_{\substack{a\in\fullassign \\ N_{1}(a)< \alpha_i}} C^{RR}(\mathrm{Alg}_{0},a)p(a) 
    \leq \sum_{\substack{a\in\fullassign \\ N_{1}(a)< \alpha_i}} C(\mathcal{A},a)p(a)\rlap{\,.}
  \]
  
  Using this, we sum the two quantities of \eqref{eq:arbRRbbsum1} 
  and \eqref{eq:arbRRbbsum0} to get the following inequality representing the cost 
  incurred for assignments in $S_{i}$.
  \begin{equation}
    \begin{split}
    \label{eq:arbRRSisum}
    \sum_{a\in S_{i}} C(\mathrm{RR}, a)p(a)
    &\leq 2
      \sum_{\substack{a\in\fullassign \\ N_{1}(a)\geq \alpha_i}} C^{RR}(\mathrm{Alg}_{1},a)p(a) \\
      &+ 2\sum_{\substack{a\in\fullassign \\ N_{1}(a)<\alpha_i}} C^{RR}(\mathrm{Alg}_{0},a)p(a)
    \\
    &\leq 2\sum_{a\in\fullassign}C(\mathcal{A}, a)p(a) = 2\OPT
  \end{split}
  \end{equation}
  Summing over all block boundaries we get
  \begin{equation}
   \label{eq:arbRRexpcost}
   \begin{split}
    \sum_{a\in\fullassign}C(\mathrm{RR},a)p(a)
   & = \sum_{i=1}^{B-1}\sum_{a\in S_{i}}C(\mathrm{RR},a)p(a)
    \leq 2(B-1)\OPT
   \end{split}
  \end{equation}
as desired.
\end{proof}

\subsection{Proof of the 4-approximation for Unweighted SSClass with Unit Costs}
\label{sec:4approx}
Before proving Theorem~\ref{thm:arbRR4approx}, we first prove
some claims.
Consider applying Algorithm~\ref{alg:arbRR} to evaluate the pseudo-Boolean function $f$ associated with a symmetric SLSC function $f$.  Assume further that the costs $c_i$ are all equal to 1.
Let $\beta_j = \alpha_{j+1}$.
Consider block $j$ of $v^f$, represented by $[\alpha_j,\beta_j)$.
Let $M^j=\{a \in \fullassign \mid \alpha_j \leq N_1(a) < \beta_j \}$.
That is, $M^j$ is the set of assignments in the $j$th block.

For a permutation $\sigma$ and an assignment $a \in M^j$,
let $c_{1}^j(\sigma,a)$ denote the
total cost incurred when bits are queried in the order specified by $\sigma$, until it is verified that
$N_{1}(a)\geq \alpha_j$ (i.e., until $\alpha_j$ 1's are seen).
Similarly, let
$c_{0}^j(\sigma,a)$ denote the total cost incurred until it is verified that
$N_{1}(a) < \beta_j$ (equivalently,
$n-\beta_j+1$ 0's are seen).  Since we are assuming unit cost tests, total cost incurred is equal to the number of bits queried.

Let $C_{1}^j(\sigma) = \sum_{a\in M^j}[c_{1}^j(\sigma,a)p(a)]$ and
similarly
$C_{0}^j(\sigma) = \sum_{a\in M^j}[c_{0}^j(\sigma,a)p(a)]$.

Let $\sigma^1$ be the permutation that orders bits in increasing order of $1/p_i$ (equivalently, decreasing order of $p_i$), and let $\sigma^0$ be the permutation that orders bits in increasing order of $1/(1-p_i)$ (equivalently, increasing order of $p_i$).
For simplicity, we assume in what follows that the $p_i$ are all different; the arguments can be easily extended if this is not the case.

\begin{claim}
  \label{clm:prob0prob1opt}
  $C^j_{1}(\sigma^1) \leq C^j_{1}(\sigma)$ for all permutations $\sigma$.
Similarly,
  $C^j_{0}(\sigma^0) \leq C^j_{0}(\sigma)$ for all permutations $\sigma$.
\end{claim}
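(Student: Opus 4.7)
The plan is an adjacent-transposition exchange argument. Suppose $\sigma$ is not $\sigma^1$; then there exist adjacent positions $k, k+1$ with $i := \sigma(k)$ and $i' := \sigma(k+1)$ satisfying $p_i < p_{i'}$. Let $\sigma'$ be $\sigma$ with $i$ and $i'$ swapped. I aim to show $C_1^j(\sigma') \leq C_1^j(\sigma)$; iterating such bubble-sort swaps transforms any $\sigma$ into $\sigma^1$ without ever increasing cost, proving optimality of $\sigma^1$. The $\sigma^0$ half of the claim is the dual statement and follows by the same argument after interchanging the roles of $0$ and $1$ (replace $p_i$ by $1-p_i$, $\alpha_j$ by $n - \beta_j + 1$, and ``seeing $\alpha_j$ ones'' by ``seeing $n - \beta_j + 1$ zeros''), so I would treat only the first inequality explicitly.

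The main tool is the tail-sum representation, valid for every $a \in M^j$,
\[
c_1^j(\sigma, a) = \sum_{t=0}^{n-1}\mathbf{1}\bigl[\text{among the first $t$ bits queried by $\sigma$, fewer than $\alpha_j$ are $1$ in } a\bigr].
\]
For every $t \neq k$, the first $t$ positions of $\sigma$ and $\sigma'$ query the \emph{same set} of bits (only the internal order at positions $k, k+1$ differs), so those terms contribute identically when I sum against $p(a)$ over $a \in M^j$. Hence only the $t = k$ term survives in the difference $C_1^j(\sigma) - C_1^j(\sigma')$.

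Let $S = \{\sigma(1), \ldots, \sigma(k-1)\}$ and $T = \{\sigma(k+2), \ldots, \sigma(n)\}$, and condition on $m := N_1(a|S)$; note that $m$ is independent of $a_i$, $a_{i'}$, and $N_1(a|T)$ since $S$, $T$, $\{i\}$, $\{i'\}$ partition the coordinates. For $m \geq \alpha_j$ the indicator vanishes in both trees; for $m \leq \alpha_j - 2$ it equals $1$ regardless of $a_i$ or $a_{i'}$, so the $S$-conditional contribution is symmetric under $i \leftrightarrow i'$ and cancels. Only $m = \alpha_j - 1$ contributes, and there the indicator reduces to $\mathbf{1}[a_i = 0]$ for $\sigma$ and $\mathbf{1}[a_{i'} = 0]$ for $\sigma'$. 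Expanding the block constraint $a \in M^j$ as $a_i + a_{i'} + N_1(a|T) \in \{1, \ldots, \beta_j - \alpha_j\}$, using independence, and collecting terms, the arithmetic collapses to
\[
C_1^j(\sigma) - C_1^j(\sigma') = \Pr[N_1(a|S) = \alpha_j - 1]\cdot \Pr[N_1(a|T) \leq \beta_j - \alpha_j - 1]\cdot (p_{i'} - p_i),
\]
which is nonnegative since $p_i < p_{i'}$.

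The main obstacle I expect is the bookkeeping around $a \in M^j$: a priori this constraint couples all $n$ coordinates globally, so there is no immediate reason a local swap should yield a clean sign on the cost difference. The observation that makes the proof go through is that in the sole non-vacuous case ($m = \alpha_j - 1$), the block constraint depends on $i$ and $i'$ only through the symmetric quantity $a_i + a_{i'}$, so all of the asymmetry between the two permutations is carried by the single factor $p_{i'} - p_i$.
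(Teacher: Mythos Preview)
Your proof is correct and follows essentially the same adjacent-transposition exchange argument as the paper. The paper frames it as a contradiction from an optimal $\pi\neq\sigma^1$ and handles the swap by a direct case analysis on $(a_l,a_{l+1})$ together with a bijection pairing the $(1,0)$ and $(0,1)$ assignments, whereas your tail-sum representation $c_1^j(\sigma,a)=\sum_t \mathbf{1}[\cdot]$ cleanly isolates the single $t=k$ term where the two orderings differ; the resulting computation and the final factor $p_{i'}-p_i$ are the same.
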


\begin{proof}
We give the proof for $C^j_{1}$.  The proof for $C^j_{0}$ is analogous.

Suppose there exists a permutation
$\pi$ such that $C^j_{1}(\pi) < C^j_{1}(\sigma^1)$.
Let $\pi$ be an optimal such permutation, so
$C^j_{1}(\pi) \leq C^j_{1}(\sigma)$ for all permutations $\sigma$.
Renumber the bits so that $\pi(i) = i$ for all $i$.

Since the $p_i$'s are distinct and $\pi \neq \sigma^1$,
there exists a bit
$1\leq l\leq n-1$, such that $p_l < p_{l+1}$.
Consider the permutation $\pi'$ produced from $\pi$ by swapping the elements in positions $l$ and $l+1$.

We will obtain a contradiction by showing that
$C^j_{1}(\pi') < C^j_{1}(\pi)$. Consider the four possible values of
$x_l$ and $x_l+1$:
\begin{itemize}
\item $x_l=0$ and $x_{l+1}=0$
\item $x_{l}=1$ and $x_{l+1}=1$
\item $x_{l}=0$ and $x_{l+1}=1$
\item $x_{l}=1$ and $x_{l+1}=0$
\end{itemize}
Consider the difference
\begin{equation*}
  C^j_{1}(\pi) - C^j_{1}(\pi') =
  \sum_{a\in M^j}\left[c_{1}^j(\pi,a)-c_{1}^j(\pi',a)\right]p(a)
\end{equation*}
and consider a specific assignment, $a \in M^j$.
Let $d$ represent the partial
assignment where $d_{i}=a_{i}$ for all $i$ such that
$i<l$ and $d_{i}=*$ otherwise. That is, $d$ contains the values of the
variables which appear before $x_{l}$ in permutation $\pi$ (and
before $x_{l+1}$ in permutation $\pi'$).

If $N_{1}(d) < \alpha_j-1$, then verifying $N_1(a) \geq \alpha_j$ using $\pi$ results in querying both $x_{l}$ and $x_{l+1}$,
so
$c_{1}^j(\pi,a)=c_{1}^j(\pi',a)$.  If
$N_{1}(d)\geq \alpha_j$, then verifying $N_1(a) \geq \alpha_j$ using $\pi$ does not involve querying either $x_{l}$ or $x_{l+1}$,
so $c_{1}^j(\pi,a)=c_{1}^j(\pi',a)$.

Suppose $N_{1}(d)=\alpha_j-1$. In
this case, if $a_{l} = a_{l+1}=0$, then $\pi$ and
$\pi'$ will query both $x_{l}$ and $x_{l+1}$ and incur the same total cost. If
$a_{l} = a_{l+1}=1$, then $\pi$ and $\pi'$ will each
query exactly one of $x_{l}$ and $x_{l+1}$ before terminating.
Since both queries have unit cost,
$\pi$ and $\pi'$ will incur the same total cost on $a$.

We are left with the assignments $a \in M_j$ where for the corresponding $d$,
$N_{1}(d)=\alpha_j-1$ \emph{and} $a_{l}\neq a_{l+1}$. Let $A$
represent the set of such assignments. It follows that
\[
  C^j_{1}(\pi) - C^j_{1}(\pi') =
  \sum_{a\in A}\left[c_{1}^j(\pi,a)-c_{1}^j(\pi',a)\right]p(a)
\]
Let $p(a,i) = (p_i)^{a_i} (1-p_i)^{(1-a_i)}$.
Then $p(a) = \prod_{i =1}^n p(a,i)$.
Let $p'(a) = p(a)/[p(a,l) \cdot p(a,l+1)]$.
Observe that
for $a\in A$, both permutation $\pi$ and $\pi'$ will result in terminating
after querying $l$ or $l+1$ bits (which of the two
depends on the values of $a_l$ and $a_{l+1}$).
There are two
cases to consider:
\begin{enumerate}
\item $a_{l}=1$ and $a_{l+1}=0$. In this case,
  $c_{1}^j(\pi,a)=l$ and $c_{1}^j(\pi',a)=l+1$.
  $p(a)=p'(a)\cdot p_{l}(1-p_{l+1})$.
\item $a_{l}=0$ and $a_{l+1}=1$. In this case,
  $c_{1}^j(\pi,a)=l+1$ and $c_{1}^j(\pi',a)=l$.
  $p(a)=p'(a)\cdot (1-p_l)p_{l+1}$.
\end{enumerate}
In the first case, we get
\begin{equation*}
  \begin{split}
  \left[c_{1}^j(\pi,a) - c_{1}^j(\pi',a)\right]p(a)
  &= \left[ p'(a)\cdot p_{l}(1-p_{l+1})\right]\left[l - (l+1)\right] \\
  &= - p'(a)\cdot p_{l}(1-p_{l+1})
\end{split}
\end{equation*}
and in the second case, we get
\[
  \left[c_{1}^j(\pi,a) - c_{1}^j(\pi',a)\right]p(a) =
  p'(a)\cdot (1-p_{l})p_{l+1}.
\]
Let $Q_{10}$ represent the set of assignments which fall in the first
case, and $Q_{01}$ the set of assignments which fall in the second
case. Note that each assignment $a \in Q_{10}$ has a corresponding assignment
$\hat{a}$ in $Q_{01}$ which is identical except in bits $l$ and $l+1$.
Further, $p'(a) = p'(\hat{a})$.  Thus

\begin{equation}
\begin{split}
  &C^j_{1}(\pi) - C^j_{1}(\pi') \\
  &= \sum_{a\in Q_{01}} p'(a)\cdot (1-p_{l})p_{l+1}
  - \sum_{a\in Q_{10}} p'(a)\cdot p_{l}(1-p_{l+1}) \\
  &= \sum_{a\in Q_{01}} [p'(a)\cdot (1-p_{l})p_{l+1} - p'(\hat{a})\cdot p_{l}(1-p_{l+1})] \\
  &= \sum_{a\in Q_{01}} p'(a)(p_{l+1} - p_{l})~~~\mbox{ since } p'(a) = p'(\hat{a}) \\
  &= (p_{l+1} - p_{l})\sum_{a\in Q_{01}} p'(a)
\end{split}
\end{equation}
But since
$p_{l} < p_{l+1}$ and the $p'(a)$ are non-negative,
$C^j_{1}(\pi) > C^j_{1}(\pi')$. This contradicts the optimality of $\pi$.

A symmetric argument shows that $\sigma^0$ minimizes $C^j_0$.
\end{proof}

Let $T$ be a decision tree representing an adaptive testing strategy. For an assignment $a\in M^j$, let $c_1^j(T,a)$ denote the total cost incurred when bits are queried as specified by $T$, until it is verified that $N_1(a)\geq \alpha_j$. Similarly, define $c_0^j(T,a)$ as the total cost incurred by the adaptive strategy $T$ when querying bits until it is verified that $N_1(a) < \beta_j$. We similarly define $C_1^j(T)=\sum_{a\in M^j} [c_1^j(T,a)p(a)]$ and $C_0^j(T) = \sum_{a\in M^j} [c_0^j(T,a)p(a)]$. We can further claim that not only are $\sigma^1$ and $\sigma^0$ better than any other permutation (in terms of $C_1^j$ and $C_0^j$) but also that they are optimal with respect to adaptive strategies. That is:
\begin{remark}
  For any $j$, and for all adaptive strategies $T$: $C_1^j(\sigma^1) \leq C_1^j(T)$ and $C_0^j(\sigma^0)\leq C_1^j(T)$.
\end{remark}
\begin{proof}
We will prove this by arguing that the optimal adaptive strategy (with respect to $C_1^j$ or $C_0^j$) is in fact a permutation (i.e., is nonadaptive). Then, it must follow from Claim~\ref{clm:prob0prob1opt} that this adaptive strategy is $\sigma^1$ (respectively, $\sigma^0$).

We do this by induction on $n$. For $n=1$, the optimal adaptive strategy is to query the single bit. Then, assume that for any function on $n$ bits, the adaptive strategy which minimizes $C_1^j$ (resp.\ $C_0^j$) is the permutation $\sigma^1$ (resp.\ $\sigma^0$). Then, for a function on $n+1$ bits, the optimal adaptive strategy is a decision tree with some bit at the root. Whether this first bit is a 0 or a 1, the result induces a new function on $n$ variables (the same $n$ variables for either outcome), and the optimal strategy in this case is the permutation that orders bits by increasing order of $1/p_i$ (resp.\ $1/(1-p_i)$). Thus the subtrees rooted at the 0-child and 1-child of the root are in fact the same permutation, and thus the entire strategy can be expressed as a permutation of the $n+1$ bits: Choose the root first, then go in increasing order of $1/p_i$ (resp.\ $1/(1-p_i)$). Since the strategy minimizing $C_1^j$ (resp.\ $C_0^j$) for $n+1$ bits is a permutation, by Claim~\ref{clm:prob0prob1opt}, it must be the permutation $\sigma^1$ (resp.\ $\sigma^0$).
\end{proof}


For a strategy $A$ and assignment $a$, let $C(A,a)$ denote the cost
incurred evaluating $a$ using strategy $A$. Thus, the expected cost of
strategy $A$ is $\sum_{a\in\fullassign}C(A,a)p(a)$.

Now let $\mathbb{A}_{OPT}$ be an adaptive strategy that minimizes the
expected cost of evaluating $f$.  Let $T_{OPT}$ be the corresponding
decision tree of this adaptive strategy.

\begin{claim}
  \label{clm:prob01ltopt}
  $C^j_{0}(\sigma^0) \leq \sum_{ a \in M^j} C(\mathbb{A}_{OPT},a)p(a)$ and
  $C^j_{1}(\sigma^1) \leq \sum_{ a \in M^j} C(\mathbb{A}_{OPT},a)p(a)$.
\end{claim}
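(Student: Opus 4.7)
The plan is to chain two inequalities: one relating the optimal evaluation cost to the ``partial verification cost'' $C_1^j(T_{OPT})$ incurred by $T_{OPT}$, and one supplied by the preceding Remark, which says $\sigma^1$ is optimal for $C_1^j$ among \emph{all} adaptive strategies (not just permutations).

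First I would argue that for every $a\in M^j$, the $a$-path in $T_{OPT}$ must query enough bits to verify that $N_1(a)\geq \alpha_j$. If not, then along that path there is a consistent extension of the known bits in which $N_1<\alpha_j$, putting $a$ into a lower block with a (possibly) different $v^f$ value; this would contradict the fact that $T_{OPT}$ is an evaluation tree for $f$. Hence the ``up to verifying $N_1(a)\geq \alpha_j$'' cost is bounded by the full path cost:
\[
c_1^j(T_{OPT},a)\leq C(\mathbb{A}_{OPT},a) \quad\text{for every } a\in M^j.
\]
The same argument, using the upper end $\beta_j$ of block $j$, gives $c_0^j(T_{OPT},a)\leq C(\mathbb{A}_{OPT},a)$.

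Next I would multiply by $p(a)$ and sum over $a\in M^j$ to obtain
\[
C_1^j(T_{OPT})\;\leq\;\sum_{a\in M^j}C(\mathbb{A}_{OPT},a)p(a),\qquad C_0^j(T_{OPT})\;\leq\;\sum_{a\in M^j}C(\mathbb{A}_{OPT},a)p(a).
\]
Then I would apply the preceding Remark, treating $T_{OPT}$ as an adaptive strategy: since $\sigma^1$ minimizes $C_1^j$ over all adaptive strategies, $C_1^j(\sigma^1)\leq C_1^j(T_{OPT})$, and analogously $C_0^j(\sigma^0)\leq C_0^j(T_{OPT})$. Composing these with the previous inequalities yields exactly the two bounds in the claim.

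The only delicate point is the verification-vs-evaluation step at the start: one must be careful that ``$T_{OPT}$ has verified $N_1(a)\geq \alpha_j$ by some point along the $a$-path'' really does follow from $T_{OPT}$ correctly outputting $v^f_j$ on $a$. This is essentially the observation used in the proof of Lemma~\ref{lem:kofnltopt}, that the subvector of $v^f$ consistent with the queries made along a root-to-leaf path of an evaluation tree is contained in a single block; specializing this to the leaf reached on $a\in M^j$ forces $N_1$ of every consistent completion to lie in $[\alpha_j,\beta_j)$, which is precisely the statement that both block boundaries have been verified. Once this observation is in hand, the rest is a one-line combination of the Remark with the path-cost monotonicity.
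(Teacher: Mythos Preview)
Your proposal is correct and follows essentially the same route as the paper: first observe that for $a\in M^j$ the full evaluation path in $T_{OPT}$ must already have seen $\alpha_j$ ones (and $n-\beta_j+1$ zeros), giving $c_1^j(T_{OPT},a)\leq C(\mathbb{A}_{OPT},a)$, then sum over $M^j$ and invoke the preceding Remark that $\sigma^1$ minimizes $C_1^j$ over all adaptive strategies. The only difference is expository---you spell out the block-containment justification for the first step via Lemma~\ref{lem:kofnltopt}, whereas the paper states it directly---but the argument is the same.
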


\begin{proof}
  In evaluating $f$ on some input $a\in M^{j}$, we cannot terminate
  until we have seen at least $\alpha_j$ ones \emph{and} at least
  $n - \beta_j+1$ zeros.  Thus if we perform tests on $a$ in the order
  indicated by $T_{OPT}$, and terminate as soon as we see $\alpha_j$
  ones, the resulting cost will be at most $C(\mathbb{A}_{OPT},a)$.
  Thus
  $\sum_{a \in M^j} c_1^j(T_{OPT},a)p(a) \leq \sum_{a \in M^j}
  C(\mathbb{A}_{OPT}, a)p(a)$.  Since $\sigma^1$ minimizes $C_1^j$,
  $\sum_{a \in M^j} c_1^j(\sigma^1,a)p(a) \leq \sum_{a \in M^j}
  C(\mathbb{A}_{OPT}, a)p(a)$.  This implies the statement for
  $\sigma^1$, and an analogous argument with $n-\beta_j+1$ zeros
  yields the statement for $\sigma^0$.
\end{proof}

Below we use Claims~\ref{clm:prob0prob1opt} and~\ref{clm:prob01ltopt}
in order to prove Theorem~\ref{thm:arbRR4approx}.

\begin{proof}[Proof of Thoerem~\ref{thm:arbRR4approx}]
  Let $\OPT$ be the expected cost
  incurred by an optimal strategy. We partition the set of all possible assignments
  $a\in\fullassign$ into two groups, $Q_{0}$ and $Q_{1}$,
  depending on whether running Algorithm~\ref{alg:arbRR} on $a$ causes it
to terminate after querying a bit chosen by
  $\mathrm{Alg}_{0}$ or a bit chosen by
  $\mathrm{Alg}_{1}$ (respectively).
  
For $l \in \bit$, let $C^{RR}(\mathrm{Alg}_{l},a)$
  represent the cost incurred by $\mathrm{Alg}_{l}$ during execution
  of Algorithm~\ref{alg:arbRR} on assignment $a$.
  As in Section~\ref{sec:MRR}, it holds that
  for $a\in Q_{0}$,
  $C^{RR}(\mathrm{Alg}_{0}, a)\geq C^{RR}(\mathrm{Alg}_{1}, a)$ and for
  $a\in Q_{1}$, $C^{RR}(\mathrm{Alg}_{1}, a)\geq C^{RR}(\mathrm{Alg}_{0},
  a)$.

  Suppose $a \in M^j \cap Q_1$.  Algorithm~\ref{alg:arbRR} terminates
  on input $a$ as soon as it has seen at least $\alpha_j$ ones and at
  least $n-\beta_{j} + 1$ zeros.  Since $a \in Q_1$,
  Algorithm~\ref{alg:arbRR} terminated as soon as it saw its $\alpha_{j}$th 1.
  It follows that $C^{RR}(\mathrm{Alg}_1,a) \leq c_1^j(\sigma^1,a)$.
Similarly, for $a \in M^j \cap Q_0 $,
$C^{RR}(\mathrm{Alg}_0,a) \leq c_0^j(\sigma^0,a)$.
Letting $B$ be the total number of blocks, so blocks are numbered from 1 to $B$,
Claim~\ref{clm:prob01ltopt} implies that for $l\in\bit$
\begin{equation}
    \label{eq:prob01ltopt}
  \begin{split}
    \sum_{j=1}^B \sum_{a\in M^j \cap Q_l}C^{RR}(\mathrm{Alg}_{l}, a)p(a)  
    &\leq
    \sum_{j=1}^B \sum_{a\in M^j \cap Q_l}c_l^j(\sigma^l, a)p(a)  \\
    &\leq
    \sum_{j=1}^B C_l^j(\sigma^l) 
    \leq
    \sum_{j=1}^B \sum_{a\in M^j}C(\mathbb{A}_{OPT},a)p(a) = \OPT
  \end{split}
\end{equation}
Thus, letting $EC$ be the expected cost of the Algorithm~\ref{alg:arbRR},
it follows from~(\ref{eq:prob01ltopt}) that we have
  \begin{equation}
    \label{eq:arb44expcost}
    \begin{split}
      EC &= \sum_{a\in Q_{0}} \left[C^{RR}(\mathrm{Alg}_{0},a) +
        C^{RR}(\mathrm{Alg}_{1},a)\right]p(a) \\
     &\qquad
     + \sum_{a\in Q_{1}} \left[C^{RR}(\mathrm{Alg}_{0},a) + C^{RR}(\mathrm{Alg}_{1},a)\right]p(a) \\
      &\leq 2\sum_{a\in Q_{0}} C^{RR}(\mathrm{Alg}_{0},a)p(a) 
      +2\sum_{a\in Q_{1}} C^{RR}(\mathrm{Alg}_{1},a)p(a) \\
      &\leq 2\sum_{j=1}^B\sum_{a\in M^j \cap Q_{0}} C^{RR}(\mathrm{Alg}_{0},a) p(a)
      +2\sum_{j=1}^B\sum_{a\in M^j\cap Q_{1}} C^{RR}(\mathrm{Alg}_{1},a) p(a) \\
      &\leq 2\OPT + 2\OPT = 4\OPT
  \end{split}
  \end{equation}
\end{proof}

\end{document}